\newcommand*\circled[1]{\tikz[baseline=(char.base)]{
            \node[shape=circle,draw,inner sep=2pt] (char) {#1};}}
\newtheorem{theorem}{Theorem}
\newtheorem{prop}{Proposition}
\newtheorem{defn}{Definition}
\newtheorem{conj}{Conjecture}
\def\[{\big[}
\def\]{\big]}
\def\min{\text{min}}
\def\mfC{{\mathfrak{C}}}
\def\N{{\sf{N}}}
\def\H{{\texttt{H}}}
\def\I{{\texttt{I}}}
\def\S{{\texttt{S}}}
\def\pI{I}
\def\pJ{J}
\def\pK{K}
\def\pII{{\underline{I}}}
\def\pJJ{{\underline{J}}}
\def\pKK{{\underline{K}}}
\def\pLL{{\underline{L}}}
\def\Res{{\text{Res}}}
\renewcommand{\oslash}{\emptyset}
\begin{document}
\begin{titlepage}
\unitlength = 1mm

\hfill CALT-TH 2023-052

\hfill MIT-CTP/5659

\vskip 3cm
\begin{center}

{\huge{Beyond the Holographic Entropy Cone via Cycle Flows}}

\vspace{0.8cm}
Temple He$^1$, Sergio Hern\'{a}ndez-Cuenca$^2$, Cynthia Keeler$^3$

\vspace{1cm}

{\it  $^1$Walter Burke Institute for Theoretical Physics, California Institute of Technology, \\ Pasadena, CA 91125 USA}\\
{\it  $^2$Center for Theoretical Physics, Massachusetts Institute of Technology, \\ Cambridge, MA 02139 USA} \\
{\it $^3$Department of Physics, Arizona State University, \\ Tempe, AZ 85281 USA}

\vspace{0.8cm}

\begin{abstract}

Motivated by bit threads, we introduce a new prescription for computing entropy vectors outside the holographic entropy cone. By utilizing cycle flows on directed graphs, we show that the maximum cycle flow associated to any subset of vertices, which corresponds to a subsystem, manifestly obeys purification symmetry. Furthermore, by restricting ourselves to a subclass of directed graphs, we prove that the maximum cycle flow obeys both subadditivity and strong subadditivity, thereby establishing it as a viable candidate for the entropy associated to the subsystem. Finally, we demonstrate how our model generalizes the entropy vectors obtainable via conventional flows in undirected graphs, as well as conjecture that our model similarly generalizes the entropy vectors arising from hypergraphs. 

\end{abstract}

\vspace{1.0cm}
\end{center}
\end{titlepage}
\pagestyle{empty}
\pagestyle{plain}
\pagenumbering{arabic}

\tableofcontents

\normalem

\section{Introduction}

The analysis of the holographic entropy cone (HEC) was initiated in \cite{Bao:2015bfa}, and its success has relied crucially on the fact that the entanglement entropy of the boundary conformal field theory (CFT) is captured geometrically by Ryu-Takayanagi (RT) surfaces in the bulk for static geometries \cite{Ryu:2006bv}.\footnote{The covariant generalization of RT surfaces is given in \cite{Hubeny:2007xt}.} In particular, given any boundary region $A$ of the CFT, the RT surface associated to $A$ is the surface homologous to $A$ in the bulk anti-de Sitter (AdS) spacetime with the minimum area. The entanglement entropy between $A$ and its complement on the boundary is then proportional to the area of the RT surface. This can of course be easily generalized to computing the entanglement entropy between multiple disjoint subregions on the boundary. We will henceforth always implicitly assume that the subregions are disjoint.

A crucial insight made in \cite{Bao:2015bfa} was that the geometrization of entanglement entropy for CFTs implies that the structure of entanglement entropy for the various subregions can be described in terms of a discrete graph. More precisely, suppose we are interested in the entanglement entropy of the boundary subregions $A_1,A_2,\ldots A_{\N}$, and denote $A_{\N+1} = (A_1 \cup A_2 \cup \dots A_\N)^c$ as the purifier, where the superscript $c$ denotes the complement. Then there exists a discrete undirected graph with a subset of its vertices labeled as $A_1,A_2,\ldots,A_{\N+1}$, which we call boundary vertices, such that the min cut of any subset of the boundary vertices gives the entanglement entropy associated to the union of corresponding boundary subregions. We will call all other vertices internal vertices.

The recasting of holographic entropies into min cuts of discrete graphs was enormously fruitful, as it allowed for the utilization of various graph theoretic tools \cite{He:2019ttu, He:2020xuo,Avis:2021xnz,Czech:2021rxe,Fadel:2021urx,Hernandez-Cuenca:2022pst,Hernandez-Cuenca:2023iqh}. For instance, the max flow-min cut theorem was used to give a ``bit thread'' interpretation of holographic entropies \cite{Freedman:2016zud, Headrick:2017ucz, Cui:2018dyq, Harper:2019lff, Headrick:2020gyq, Headrick:2022nbe}. Under this perspective, rather than thinking of entanglement entropy as given by a minimum cut in the graph, it is given by a maximal flow between a subset of boundary vertices and its complement. In the original boundary CFT, we can interpret the entanglement entropy of a subregion as the maximum flux of a vector field that flows through the AdS bulk geometry out of the subregion.

Despite the various successes of studying entropies of holographic states, there are many quantum states, such as the 4-party GHZ state, that are not holographic. It is then a natural question if there is some generalization of undirected graphs that can perhaps capture a wider class of states. The most obvious choice may be to allow the graph to have directed edges. However, if the entropy associated to a subset of boundary vertices is given by the max flow, then there is no guarantee that it is equal to the max flow out of the complement of the subset of boundary vertices, thereby violating purification symmetry of entanglement entropy. Nevertheless, rather than introducing directed edges, there have been various such generalizations in the literature, including both hypergraphs \cite{Bao:2020zgx} and links \cite{Bao:2021gzu}.%
\footnote{In particular, stabilizer states provide a well-studied class of states which are not all holographic. The papers cited above use the stabilizer states as an example. For more on the entropies available to stabilizer states see \cite{Bao:2020zgx,Walter:2020zvt,Bao:2020mqq,Keeler:2022ajf,Keeler:2023xcx,Keeler:2023shl}; for similar explorations on Dicke states, see \cite{Munizzi:2023ihc}.} Such generalizations automatically respect purification symmetry as there is no directionality in their structures, while also allowing for min cut structures that are disallowed in the original undirected graphs, and there has been fruitful progress in using them to explore entropy vectors of non-holographic states \cite{Walter:2020zvt, Bao:2020mqq, He:2023cco, He:2023aif}. 

In this paper, we present an alternative way of moving beyond entropies of holographic states by using instead another notion of flows. Simply stated, our protocol to compute the entropy $\S(\pII)$ from a graph is the following.
\begin{itemize}
    \item Begin with a balanced directed network, which are graphs with directed edges obeying the condition where at every vertex the total weight of ingoing edges equals that of outgoing edges, that also obeys a property known as ``nesting'' (introduced below).\footnote{For us, we choose to call a network a graph with edge weights. This is defined precisely later.}
    \item Identify the set of boundary vertices $\p V$ which represent boundary regions.
    \item Partition the boundary vertices $\p V$ into the set $\pII$ and its complement, $\pII^c \coloneqq \p V \setminus \pII$.\footnote{We underline $\pII$ to indicate it may contain the boundary vertex corresponding to the purifier.}
    \item Consider closed paths (cycles) in the graph which pass through at least one vertex in $\pII$ and at least one vertex in $\pII^c$.
    \item The entropy $\S(\pII)$ is given by the magnitude of the maximum simultaneous cycle flow, defined in \Cref{ssec:graph-theory}, along all such paths.
\end{itemize}
Note that this protocol uses cycle flows, which are flows with neither sources nor sinks, on a subclass of balanced directed networks. Additionally, the specific flow pattern is important; we need to know which cycles have flow turned on, not just which edges.  As we shall see, this generalization to directed graphs under a cycle flow protocol allows access to a broader set of entropy vectors than the undirected or hyperedge graph protocols.  Additionally, the cycle flow protocol retains some of the topological features of the link picture.

Importantly, this cycle flow protocol respects purification symmetry, since the vertex set $\pII$ and its complement $\pII^c$ play the same role.  As we show below, entropies calculated from the cycle flow protocol always obey both subadditivity (SA) and strong subadditivity (SSA), which are necessary conditions for identifying the flow as entanglement entropy. Furthermore, we will also show that in general, the flow will not satisfy monogamy of mutual information (MMI), which is a holographic entropy inequality \cite{Hayden:2011ag}, thereby implying entropies represented by such cycle flows represent non-holographic states as well.

Our paper is structured as follows. In \Cref{sec:prelim}, we give a brief review of relevant background information from both quantum information theory and graph theory. We will also use the opportunity to establish the various conventions used throughout. In \Cref{ssec:entropy}, we will formally introduce our prescription for identifying the maximum cycle flow with the entropy. We then prove in \Cref{ssec:sa} and \ref{ssec:ssa} that our prescription satisfies both SA and SSA, giving evidence that such maximum cycle flows can indeed be thought of as von Neumann entropies. Finally, we conclude with a discussion and open questions in \Cref{sec:discussion}. In \Cref{app:dual-program}, we remark on the dual program and how to prove SA using it.

\section{Preliminaries and Conventions}\label{sec:prelim}

In this section, we review relevant background information from both graph theory and quantum information theory. The quantum information theory portion is reviewed in \Cref{ssec:qi}, and the graph theory portion is reviewed in \Cref{ssec:graph-theory}.

\subsection{Quantum Information Theory}\label{ssec:qi}

The systematic study of the entropy cone associated to a general quantum system is organized in terms of the number of parties.\footnote{For the holographic entropy cone, each party corresponds to a spatial boundary subregion of the CFT, and all the subregions are disjoint.} If there are $\N$ parties $A_1,A_2,\ldots,A_\N$, we can specify the entanglement structure of the system by a vector in $2^{\N}-1$ dimensions, with each component being $\S(\pI)$, the entropy of the subsystem $\bigcup_{i \in \pI} A_{i}$, for every nonempty subset $\pI \subseteq[\N] \coloneqq \{1,2,\ldots,\N\}$. For small number of parties, we can also label the parties $A,B,\ldots$. The vector space in which this entropy vector lives is called the entropy space. For example, for $\N=3$ parties, with the parties labeled as $A,B$, and $C$, the entropy vector lives in the $7$-dimensional entropy space and is given by
\begin{align}
    \vec \S = \big(\S(A), \; \S(B), \; \S(C), \; \S(AB), \; \S(AC), \; \S(BC), \; \S(ABC) \big).
\end{align}
In general, we will label the components of the entropy vector in lexicographical order. There is the additional purifier party $A_{\N+1} \coloneqq (A_1 \cup A_2 \cup \cdots \cup A_\N)^c$ (or $O$ if we are labeling the parties using the alphabet), but by purification symmetry we have
\begin{align}\label{eq:purification}
    \S(\pII) = \S(\pII^c),
\end{align}
for any $\pII \subseteq [\N+1]$ and $\pII^c \coloneqq [\N+1]  \setminus \pII$ is the complement. We will always use $\pI,\pJ,\ldots$ to denote subsets of $[\N]$, and $\pII,\pJJ,\ldots$ to denote subsets of $[\N+1]$, which includes the purifier. Thus, the entropy of any subset of parties involving the purifier $A_{\N+1}$ can be rewritten as the entropy of its complement, which does not involve $A_{\N+1}$ and is hence a component of the entropy vector as defined above.

Not every point in entropy space is realizable by quantum states. For instance, the positivity of entropy implies that for all $\pII \subseteq [\N+1]$, we have
\begin{align}
    \S(\pII) \geq 0.
\end{align}
This means that entropy vectors corresponding to some quantum state must live in the region of entropy space where all its components are positive. Furthermore, we also know that entropies obey certain linear inequalities, which for $\pI,\pJ,\pK \subseteq [\N]$ are given by
\begin{align}
    \text{Subadditivity (SA):}\quad & \S(\pI) + \S(\pJ) \geq \S(\pI\pJ) \\
    \text{Araki-Lieb (AL):}\quad & \S(\pI) + \S(\pI\pJ) \geq \S(\pJ) \\
    \text{Strong subadditivity (SSA):} \quad & \S(\pI\pJ) + \S(\pJ\pK) \geq \S(\pJ) + \S(\pI\pJ\pK) \\
    \text{Weak monotonicity (WM):} \quad & \S(\pI\pJ) + \S(\pJ\pK) \geq \S(\pI) + \S(\pK),
\end{align}
where $\pI\pJ \coloneqq \pI \cup \pJ$ is shorthand for the union of the two subset of regions, and $\pI\pJ\pK$ is similarly defined for three regions. In the above inequalities, we did not utilize the purifier entropy $\S(\N+1)$. It is an easy exercise to see that AL and WM are related to SA and SSA by purification symmetry \eqref{eq:purification}, respectively. Therefore, we can consolidate the four types of inequalities above into the following two types:
\begin{align}
    \text{Subadditivity (SA):}\quad & \S(\pII) + \S(\pJJ) \geq \S(\pII\pJJ) \label{eq:sa} \\
    \text{Strong subadditivity (SSA):} \quad & \S(\pII\pJJ) + \S(\pJJ\pKK) \geq \S(\pJJ) + \S(\pII\pJJ\pKK) . \label{eq:ssa}
\end{align}
Henceforth, SA and SSA will refer to the inequalities \eqref{eq:sa} and \eqref{eq:ssa}, which can involve subsystems containing the purifier $\N+1$.

For any choice of $\pII,\pJJ,\pKK$, the inequalities \eqref{eq:sa} and \eqref{eq:ssa} each define a half-space in entropy space. The intersection of all such half-spaces then forms a polyhedral cone in which entropy vectors corresponding to realizable quantum states live. We will refer to this cone as the SSA entropy cone (even though the cone is constrained by both SA and SSA). The quantum entropy cone (QEC) is defined to be the topological closure of the set of all the entropy vectors realizable by quantum states, and was shown in \cite{pippenger2003inequalities} 
to be a convex cone. Thus, because all such entropy vectors must obey SA and SSA, the QEC lives inside the SSA entropy cone. It is known that for $\N=3$ the QEC actually coincides with the SSA entropy cone. However, for $\N\geq 4$, besides the existence of infinitely many constrained inequalities (inequalities that only apply on subspaces of codimension higher than zero) \cite{Linden:2004ebt,Cadney_2012}, very little is known about the QEC.\footnote{In particular, it remains an open question whether the QEC is even polyhedral for higher parties.}

Holographic states are a subset of quantum states. Therefore, the holographic entropy cone (HEC), which is a convex, polyhedral cone consisting of the set of all entropy vectors realizable by holographic states \cite{Bao:2015bfa,Avis:2021xnz}, lies within the QEC. Indeed, we know that holographic states satisfy additional inequalities that are not necessarily satisfied by all quantum states. These are known as holographic entropy inequalities (HEIs), and one such inequality is the monogamy of mutual information (MMI) \cite{Hayden:2011ag,Cui:2018dyq}, which is given by
\begin{align}\label{eq:mmi}
\begin{split}
    &\text{Monogamy of mutual information (MMI):} \\
    &\qquad\qquad \S(\pII\pJJ) + \S(\pII\pKK) + \S(\pJJ\pKK) \geq \S(\pII) + \S(\pJJ) + \S(\pKK) + \S(\pII\pJJ\pKK) .
\end{split}
\end{align}
Notice that MMI and SA together imply SSA, since adding \eqref{eq:sa} to \eqref{eq:mmi} yields (a relabeling of) \eqref{eq:ssa}. MMI is the only HEI for $\N < 5$, but additional HEIs are known for $\N \geq 5$ \cite{Bao:2015bfa,Cuenca:2019uzx,Czech:2022fzb,Hernandez-Cuenca:2023iqh,Czech:2023xed}.

It is known that a model for describing entropy vectors in the HEC uses undirected graphs \cite{Bao:2015bfa}. More specifically, given any entropy vector in HEC involving $\N$ parties (excluding the purifier), there exists an undirected graph $\CG=(V,E)$ such that a subset of vertices $\p V \coloneqq \{1,2,\ldots,\N+1\} \subseteq V$, which we call boundary vertices, correspond to the $\N$ parties plus the purifier, and the min cut of any subset of boundary vertices equals the entropy associated to the corresponding subset of parties. Entropy as defined by such graph models satisfy the HEIs by construction. Because our goal is to explore beyond the HEC (but still remain within the QEC), we would like to determine an alternative model to describe a class of entropy vectors that satisfy SA \eqref{eq:sa} and SSA \eqref{eq:ssa} without having to satisfy HEIs like MMI \eqref{eq:mmi}. We will introduce such a model in \Cref{sec:entropy-cone}, but before continuing, let us first review some preliminaries from graph theory in the next subsection.

\subsection{Graph Theory}\label{ssec:graph-theory}

We begin by introducing notation and establishing conventions for relevant graph theory concepts. Consider first a discrete directed graph $\CG \coloneqq (V,E)$, where $V$ is the collection of vertices and $E$ is the collection of directed edges such that there is at most one directed edge from any vertex $v \in V$ to any other vertex $v' \in V$.\footnote{Note that undirected graphs are a special case of directed graphs where between every two vertices $v_1, v_2 \in V$ there exists a directed edge from $v_1$ to $v_2$ and another directed edge from $v_2$ to $v_1$ (see \Cref{fig:GtoN}). Thus, it suffices for us to study directed graphs only.} Associated to every directed edge $e \in E$ is a capacity function $c:\, E \to \mathbb R_{\geq 0}$, which maps every edge to a non-negative real number known as the edge's capacity or weight. We can then define a network $\CN \coloneqq (V,E,c)$ to be a graph $(V,E)$ with a capacity function $c$.

For every directed edge $e$, it is useful to define $s(e) \in V$ to be the source vertex from which the edge originates from, and $t(e) \in V$ to be the target vertex towards which the edge is directed. Using the capacity function, we will now introduce the following notion of simple cycle flows, which will be useful for our later construction of multi-cycle flows.

\begin{defn}[Simple cycle flow] \label{def:cycle-flow}
    Given a network $\CN = (V,E,c)$, we define a edge cycle to be an ordered sequence of distinct edges $E_\CC \coloneqq \{e_1, e_2, \ldots, e_n \}$ such that  $t(e_i) = s(e_{i+1})$ for $i=1,\ldots,n$ with $e_{n+1} \coloneqq e_1$. Notice that $E_\CC$ is defined only up to cyclic permutations. A simple cycle flow is then an ordered pair $\CC \coloneqq (E_\CC; \l)$, where $E_\CC$ is any edge cycle and $\l \in \mathbb R_{\geq 0}$ is a non-negative number,
    satisfying the property   
    \begin{align}
        c(e) \geq \l \quad\text{for every $e \in E_\CC$.}
    \end{align}
    We refer to $\l$ as the magnitude,  or flux, of the simple cycle flow $\CC$.
\end{defn}

Unlike typical flows in network theory involving sources and sinks, which have arisen in quantum information literature as bit threads \cite{Freedman:2016zud}, simple cycle flows have not been connected to any notion of entropy. However, we will introduce a proposal in the next section that relates cycle flows to entanglement entropy beyond holographic states. It would be interesting to see if there are any relations between the cycle flows we are considering and the more well-understood bit threads (see \Cref{fn:bit-thread-connection}).\footnote{For an exploration of the more conventional flows involving sources and sinks, and how they pertain to the study of the holographic entropy cone in the context of bit threads, we refer the reader to \cite{Freedman:2016zud} and in particular Section 6 of \cite{Cui:2018dyq}.}

In order to connect cycle flows to entropy, we will need to define a notion of magnitude for such flows. Typically in network theory, the magnitude of a flow is measured by the flux from vertices that are sources to vertices that are sinks. However, cycle flows have zero flux, as there are no vertices that are sources or sinks. Hence, the flux of a cycle flow into any vertex equals that out of the vertex. Nevertheless, given the set of boundary vertices $\p V = [\N+1] \subseteq V$, we can introduce an alternative notion of the magnitude of a flow through some subset of boundary vertices. As was introduced in \Cref{ssec:qi}, subsets of boundary vertices will be denoted using $\pII,\pJJ$, and $\pKK$.

\begin{defn} \label{def:flux}
    Given a network $\CN = (V,E,c)$, let $\p V = [\N+1] \subseteq V$ be a subset of vertices called boundary vertices. For any simple cycle flow $\CC = (E_\CC;\lambda)$ on $\CN$, define 
    \begin{align}
        V(\CC) \coloneqq \bigg( \bigcup_{e \in E_\CC} t(e) \bigg) \cap \p V   
    \end{align}
    to be the boundary vertices part of the cycle flow $\CC$. Letting $\pII,\pJJ \subseteq \p V$ denote any two disjoint subsets of boundary vertices, we define the function
    \begin{align}\label{eq:single-flux}
        f_\CC(\pII,\pJJ) = \begin{cases}
            \lambda & V(\CC) \cap \pII \neq \oslash \quad\text{and}\quad V(\CC) \cap \pJJ \neq \oslash \\
            0 & \text{otherwise.}
        \end{cases}
    \end{align}
    Notice that $f_\CC$ is symmetric with respect to its two arguments, and we call $f_\CC(\pII,\pJJ)$ the flux between $\pII$ and $\pJJ$ due to simple cycle $\CC$. Thus, the flux between $\pII$ and $\pJJ$ is nonzero only if $\CC$ involves at least one vertex in $\pII$ and one in $\pJJ$.
\end{defn}

Utilizing simple cycles, we can now define the analog of multi-flows for cycle flows, which we call multi-cycle flows.

\begin{defn}[Multi-cycle flow] \label{def:multi-flow}
    Given a network $\CN = (V,E,c)$, a multi-cycle flow is a set of simple cycle flows $\mfC \coloneqq \{\CC_i = (E_{i};\l_i)\}_{i=1,\ldots,m}$ obeying the following condition. For every $e \in E$, defining $u_i(e) = 1$ if $e \in E_{i}$ and 0 otherwise, we have
    \begin{align} \label{eq:multi-flow-cond}
        \sum_{i=1}^m u_i(e) \l_i \leq c(e).
    \end{align}
    Furthermore, the set of boundary vertices which are part of the multi-cycle flow is given by
    \begin{align}
        V(\mfC) \coloneqq \bigcup_{\CC \in \mfC} V(\CC).
    \end{align}
    
    Lastly, letting $\pII,\pJJ \subseteq \p V$ denote any two disjoint subsets of boundary vertices, the flux between $\pII$ and $\pJJ$ due to $\mfC$ is defined to be
    \begin{align}
        f_\mfC(\pII,\pJJ) = \sum_{i=1}^m f_{\CC_i}(\pII,\pJJ) ,
    \end{align}
    where $f_{\CC_i}(\pII,\pJJ)$ is the flux between $\pII$ and $\pJJ$ due to simple cycle flow $\CC_i$, as defined in \eqref{eq:single-flux}.
\end{defn}

Given any network $\CN$ and any subset of boundary vertices $\pII \subseteq \p V$, it is clear that the maximum flux between $\pII$ and its complement $\pII^c \coloneqq \p V \setminus \pII$ is bounded above by the sum of capacities of edges emanating from $\pII$, that is,
\begin{align}
    \max_{\mfC} f_\mfC(\pII,\pII^c) \leq \sum_{e\in E:\, s(e) \in \pII} c(e).
\end{align}
In particular, this means the maximum flux between $\pII$ and $\pII^c$ is well-defined, and we denote it as
\begin{align}\label{eq:max-flow}
    \S(\pII) \coloneqq \max_{\mfC} f_\mfC(\pII,\pII^c) .
\end{align}
In the following section, we will identify the maximum flux of a multi-cycle flow through some boundary vertices $\pII$ of a \emph{balanced} network (defined in \Cref{def:balance}) as the entropy, hence the suggestive notation of denoting it as $\S(\pII)$. Furthermore, for convenience, we will henceforth denote 
\begin{align}
    f_\CC(\pII) \coloneqq f_\CC(\pII,\pII^c) , \qquad f_\mfC(\pII) \coloneqq f_\mfC(\pII,\pII^c).
\end{align}
We will also introduce the notion of a subflow in a multi-cycle flow.
\begin{defn}[Multi-cycle subflow]
    Given a network $\CN = (V,E,c)$ and a multi-cycle flow $\mfC:= \{\CC_i = (E_i;\l_i)\}_{i=1,\ldots,m}$, a multi-cycle $\mfC' = \{\CC_i' = (E_i';\l_i')\}_{i=1,\ldots,m}$ is a multi-cycle subflow of $\mfC$ if $E_i' = E_i$ and $\l_i' \leq l_i$ for all $i$.\footnote{If $\l_j' = 0$ for some $j$ this just means $\mfC'$ does not contain the simple cycle $\CC_j$.} 
\end{defn}

Let us illustrate the notion of flux and multi-cycle flows with a simple example. In the figure below, we draw two possible multi-cycle flows, where $\mfC_1$ is on the left figure and $\mfC_2$ is on the right figure. The multi-cycle flows $\mfC_1$ and $\mfC_2$ each consists of two simple cycles colored by red and blue. We can easily compute
\begin{align}
\begin{split}
    f_{\mfC_1}(A) = 2, \qquad f_{\mfC_1}(B) = 2, \qquad f_{\mfC_1}(AB) = 0, \\
    f_{\mfC_2}(A) = 2, \qquad f_{\mfC_2}(B) = 2, \qquad f_{\mfC_2}(AB) = 1.
\end{split}
\end{align}
In particular, notice that the flux out of $AB$ in $\mfC_1$ differs from that in $\mfC_2$.

\begin{figure}[H] 
    \centering
    \begin{tikzpicture}[scale=1, node distance=2cm]
        \node[style={draw, circle, fill=black, inner sep=2pt}] (A) at (0, 2) {};
        \node[style={draw, circle, fill=black, inner sep=2pt}] (B) at (0, 0) {};
        \node[style={draw, circle, fill=black, inner sep=2pt}] (i) at (1, 1) {};
        \node[style={draw, circle, fill=black, inner sep=2pt}] (O) at (3, 1) {};

        \node[right=2mm] at (O) {$O$};
        \node[below=2mm] at (B) {$B$};
        \node[above=2mm] at (A) {$A$};

        \draw[-latex, line width=1pt, color=red] (A) -- node[left] {} (B);
        \draw[-latex, line width=1pt, color=red] (A) to[bend right=45] node[left] {} (B);
        \draw[-latex, line width=1pt, color=red] (B) -- node[below] {} (i);
        \draw[-latex, line width=1pt, color=red] (B) to[bend right=45] node[below] {} (i);
        \draw[-latex, line width=1pt, color=red] (i) -- node[above] {} (A);
        \draw[-latex, line width=1pt, color=red] (i) to[bend right=45] node[above] {} (A);
        \draw[-latex, line width=1pt, color=blue] (i) to[bend left=45] node[above] {} (O);
        \draw[-latex, line width=1pt,color = blue] (O) to[bend left=45] node[below] {} (i);

        \begin{scope}[xshift=6cm]
            \node[style={draw, circle, fill=black, inner sep=2pt}] (A) at (0, 2) {};
        \node[style={draw, circle, fill=black, inner sep=2pt}] (B) at (0, 0) {};
        \node[style={draw, circle, fill=black, inner sep=2pt}] (i) at (1, 1) {};
        \node[style={draw, circle, fill=black, inner sep=2pt}] (O) at (3, 1) {};

        \node[right=2mm] at (O) {$O$};
        \node[below=2mm] at (B) {$B$};
        \node[above=2mm] at (A) {$A$};

        \draw[-latex, line width=1pt, color=red] (A) -- node[left] {} (B);
        \draw[-latex, line width=1pt, color=blue] (A) to[bend right=45] node[left] {} (B);
        \draw[-latex, line width=1pt, color=red] (B) -- node[below] {} (i);
        \draw[-latex, line width=1pt, color=blue] (B) to[bend right=45] node[below] {} (i);
        \draw[-latex, line width=1pt, color=red] (i) -- node[above] {} (A);
        \draw[-latex, line width=1pt, color=blue] (i) to[bend right=45] node[above] {} (A);
        \draw[-latex, line width=1pt, color=blue] (i) to[bend left=45] node[above] {} (O);
        \draw[-latex, line width=1pt, color=blue] (O) to[bend left=45] node[below] {} (i);
        \end{scope} 
    \end{tikzpicture}
    \caption{All the flows implicitly are weight 1. We illustrate two  multi-cycle flows on the graph. In both cases, the multi-cycle flow consists of two simple cycles, which are colored red and blue in the figure. Notice that the two multi-cycle flows give rise to different fluxes through the subsets of vertices.} \label{fig:example1}
\end{figure}
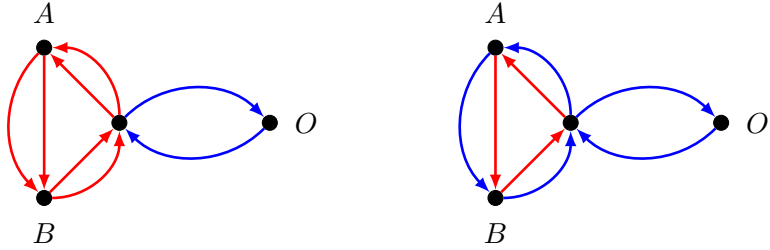

For future convenience, it is actually useful to define a notion of a ``minimal'' simple cycle that does not use unnecessary edges. To make this definition precise, we begin by defining the natural notion of a subcycle of a simple cycle flow.
\begin{defn}[Subcycle flow]
\label{def:subcycle}
     Given a simple cycle flow $\CC =(E_\CC;\lambda)$, where $E_{\CC} = \{e_1,\ldots,e_n\}$, a subcycle flow is any simple cycle flow $\CC' =(E_\CC';\lambda)$, with $E_{\CC'}$ being a subsequence of $E_\CC$.
\end{defn}

From \Cref{def:flux}, it is clear that the flux between two sets of boundary vertices $\pII$ and $\pJJ$ for a simple cycle flow $\CC$ will be unchanged if $\CC$ is replaced by any subcycle flow $\CC'$ so long as the latter also traverses vertices in both $\pII$ and $\pJJ$. For this reason, it makes sense to define a notion of minimality of simple cycle flows relative to vertex subsets as follows.

\begin{defn}[Minimal simple cycle flow]
\label{def:minimal-cycle}
    A simple cycle flow $\CC$ is minimal with respect to a collection of boundary vertex subsets $\{\pII_1,\pII_2,\ldots, \pII_m\}$ if, for any subcycle flow $\CC'$ with $E_{\CC'} \subset E_{\CC}$ , the flux $f_{\CC'}(\pII_i,\pII_j) < f_{\CC}(\pII_i,\pII_j)$ for some $i,j\in[m]$. 
\end{defn}

\begin{figure}
    \centering
\centering
    \begin{tikzpicture}[scale=1, node distance=2cm]
        \node[style={draw, circle, fill=black, inner sep=2pt}] (A2) at (0, 0) {};
        \node[style={draw, circle, fill=black, inner sep=2pt}] (A1) at (4, 0) {};
        \node[style={draw, circle, fill=black, inner sep=2pt}] (B) at (2, 1.5) {};
        \node[style={draw, circle, fill=black, inner sep=2pt}] (C) at (2, 4) {};

        \node[above=2mm] at (B) {$B$};
        \node[below=2mm] at (A1) {$A_1$};
        \node[below=2mm] at (A2) {$A_2$};
        \node[above=2mm] at (C) {$C$};

        \draw[-latex, line width=1pt, color=black] (A2) to[bend left=30] node[below] {} (B);
        \draw[-latex, line width=1pt, color=black] (B) to[bend left=30] node[above] {} (A2);
        \draw[-latex, line width=1pt, color=red] (B) to[bend left=30] node[above] {$\circled{\scriptsize{4}}$} (A1);
        \draw[-latex, line width=1pt,color = red] (A1) to[bend left=30] node[below] {$\circled{\scriptsize{1}}$} (B);
        \draw[-latex, line width=1pt,color = red] (B) to[bend left=30] node[left] {$\circled{\scriptsize{2}}$} (C);
        \draw[-latex, line width=1pt,color = red] (C) to[bend left=30] node[right] {$\circled{\scriptsize{3}}$} (B);

        \begin{scope}[xshift=8cm]
        \node[style={draw, circle, fill=black, inner sep=2pt}] (A2) at (0, 0) {};
        \node[style={draw, circle, fill=black, inner sep=2pt}] (A1) at (4, 0) {};
        \node[style={draw, circle, fill=black, inner sep=2pt}] (B) at (2, 1.5) {};
        \node[style={draw, circle, fill=black, inner sep=2pt}] (C) at (2, 4) {};

        \node[above=2mm] at (B) {$B$};
        \node[below=2mm] at (A1) {$A_1$};
        \node[below=2mm] at (A2) {$A_2$};
        \node[above=2mm] at (C) {$C$};

        \draw[-latex, line width=1pt, color=red] (A2) to[bend left=30] node[above] {$\circled{\scriptsize{3}}$} (B);
        \draw[-latex, line width=1pt, color=red] (B) to[bend left=30] node[below] {$\circled{\scriptsize{2}}$} (A2);
        \draw[-latex, line width=1pt, color=red] (B) to[bend left=30] node[above] {$\circled{\scriptsize{6}}$} (A1);
        \draw[-latex, line width=1pt,color = red] (A1) to[bend left=30] node[below] {$\circled{\scriptsize{1}}$} (B);
        \draw[-latex, line width=1pt,color = red] (B) to[bend left=30] node[left] {$\circled{\scriptsize{4}}$} (C);
        \draw[-latex, line width=1pt,color = red] (C) to[bend left=30] node[right] {$\circled{\scriptsize{5}}$} (B);
        \end{scope} 
    \end{tikzpicture}
    \caption{In the figure, we have a partition of the four vertices of the graph into $\{A_1,A_2\}$, $\{B\}$, and $\{C\}$. All the edges have capacity 1, and the circled numbers indicate the ordering of the edge cycle in the cycle flow. On the left, we have a minimal cycle flow that is colored in red. On the right we have a simple cycle flow that is not a minimal cycle flow, since we can remove the subcycle involving edges labeled by 2 and 3 without changing the flux between any pair of vertex subsets $\{A_1,A_2\}$, $\{B\}$, and $\{C\}$.} \label{fig:minimal-cycle}
\end{figure}
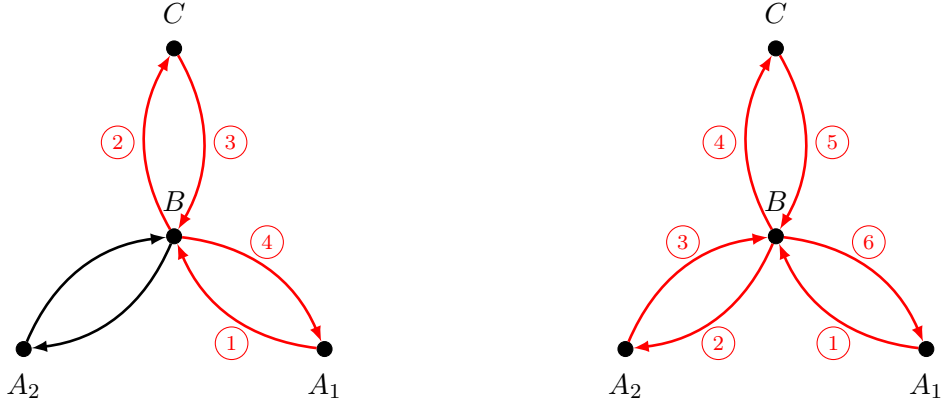

We illustrate the difference between a simple cycle flow that is a minimal cycle flow and one that is not in \Cref{fig:minimal-cycle}. Effectively, minimal cycles allow us to capture the flux between subsets of boundary vertices without ``wasting'' any flows on edges that do not contribute to the flux. Similarly, we can readily generalize the notion of ``minimality'' to multi-cycle flows.

\begin{defn}[Minimal multi-cycle flow]
    Given a partition of the boundary vertices $\p V$ into subsets $\{\pII_1, \ldots,\pII_n\}$, a multi-cycle flow $\mfC$ is minimal with respect to $\{\pII_1,\ldots,\pII_n\}$ if all the simple cycle flows comprising $\mfC$ are minimal with respect to $\{\pII_1,\ldots,\pII_n\}$.
\end{defn}

Finally, we want to begin restricting ourselves to the types of directed networks under consideration. The first restriction we impose is to focus on balanced networks, which are defined as follows.

\begin{defn}\label{def:balance}
    A network $\CN = (V,E,c)$ is balanced if at every vertex $v \in V$, the total capacity of ingoing edges equals that of outgoing edges. In other words, 
    \begin{align}
        \sum_{e \in E:\, s(e)=v} c(e) = \sum_{e \in E:\, t(e) = v} c(e) \quad\text{for all $v \in V$.}
    \end{align}
\end{defn}

The reason we want to restrict ourselves to balanced networks is because otherwise, SSA fails if the entropy through some boundary vertices $\pII$ is given by the maximum flux through it. For instance, consider the unbalanced network below, where every edge has capacity 1. It is easy to verify that the maximum flux through $C$ is 2, through $AC$ is 1, through $BC$ is 2, and through $ABC$ is 2. But this means $\S(AC) + \S(BC) \ngeq \S(C) + \S(ABC)$. Hence, SSA fails for this unbalanced network.

\begin{figure}[H]
    \centering
    \begin{tikzpicture}
        \node[style={draw, circle, fill=black, inner sep=2pt}] (B) at (0, 0) {};
        \node[style={draw, circle, fill=black, inner sep=2pt}] (i) at (2, 0) {};
        \node[style={draw, circle, fill=black, inner sep=2pt}] (C) at (2, 2) {};
        \node[style={draw, circle, fill=black, inner sep=2pt}] (A) at (0, 2) {};
        \node[style={draw, circle, fill=black, inner sep=2pt}] (O) at (1, -1) {};

        \node[below=2mm] at (O) {$B$};
        \node[above=2mm] at (C) {$C$};
        \node[left=2mm] at (B) {$O$};
        \node[above=2mm] at (A) {$A$};
        
        \draw[-latex, line width=1pt] (A) -- node[left] {} (B);
        \draw[-latex, line width=1pt] (B) -- node[above] {} (C);
        \draw[-latex, line width=1pt] (A) to[bend left=30] node[] {} (C);
        \draw[-latex, line width=1pt] (C) to[bend left=30] node[above] {} (A);
        \draw[-latex, line width=1pt] (C) -- node[] {} (i);
        \draw[-latex, line width=1pt] (i) -- node[above] {} (B);
        \draw[-latex, line width=1pt] (B) -- node[left] {} (O);
        \draw[-latex, line width=1pt] (O) -- node[right] {} (i);
    \end{tikzpicture}
    \caption{An unbalanced network, where all the edges have capacity 1, that does not satisfy SSA. It is straightforward to verify $\S(C) = 2, \S(AC) = 1, \S(BC) = 2,$ and $\S(ABC)=2$. Then $\S(AC) + \S(BC) 
 = 1 + 2 \ngeq \S(C) + \S(ABC) = 2 + 2$.}\label{fig:unbalanced}
\end{figure}
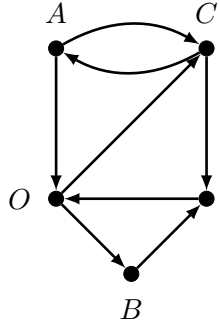

We now introduce the standard notion of a residual network given a network $\CN = (V,E,c)$, as applied to a multi-cycle flow $\mfC = \{(E_i;\l_i)\}_{i=1,\ldots,m}$. Utilizing the definition of $u_i$ given in \Cref{def:multi-flow}, the residual network is given by
\begin{align}\label{eq:residual-network}
    \Res(\CN,\mfC) \coloneqq \bigg(V,E, c - \sum_{i=1}^m u_i\lambda_i \bigg) .
\end{align}
Notice that because of \eqref{eq:multi-flow-cond}, the capacity function of the residual network is always non-negative. If the capacity of an edge in $\Res(\CN,\mfC)$ is 0, then we simply remove the edge. Intuitively, the residual network is the remaining capacity of the edges of $\CN$ given there is a multi-cycle flow $\mfC$ already flowing on $\CN$, and is useful for many reasons, including the following proposition.

\begin{prop}\label{prop:adding-flows}
    Given any (not necessarily balanced) network $\CN = (V,E,c)$ and a multi-cycle flow $\mfC$, if $\CC$ is a simple cycle flow in $\Res(\CN,\mfC)$, then $\mfC \cup \CC$ is a multi-cycle flow in the original network $\CN$.    
\end{prop}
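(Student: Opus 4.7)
The plan is to verify the multi-cycle flow capacity condition \eqref{eq:multi-flow-cond} directly for the enlarged collection. Write $\CC = (E_\CC;\lambda)$ and $\mfC = \{(E_i;\lambda_i)\}_{i=1}^m$, and let $u_i(e)$ and $u_{m+1}(e)$ be the indicator functions that $e \in E_i$ and $e \in E_\CC$, respectively. What must be checked is that for every $e \in E$,
\begin{align}
\sum_{i=1}^m u_i(e)\lambda_i + u_{m+1}(e)\lambda \leq c(e).
\end{align}

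I would split into two cases. If $e \notin E_\CC$, then $u_{m+1}(e)=0$ and the inequality reduces to the hypothesis that $\mfC$ is itself a multi-cycle flow in $\CN$. If $e \in E_\CC$, then by \Cref{def:cycle-flow} applied to $\CC$ inside $\Res(\CN,\mfC)$, the residual capacity satisfies $c'(e) \geq \lambda$, where $c'(e) = c(e) - \sum_i u_i(e)\lambda_i$ by \eqref{eq:residual-network}. Rearranging immediately gives $\sum_i u_i(e)\lambda_i + \lambda \leq c(e)$, which is exactly the required bound.

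The only mild subtlety I would address is the bookkeeping convention around edges removed from $\Res(\CN,\mfC)$ when their residual capacity vanishes. Since $\CC$ is a simple cycle flow in the residual network, every edge in $E_\CC$ must still be present there and hence has strictly positive residual capacity at least $\lambda$, so the estimate $c'(e) \geq \lambda$ is available throughout $E_\CC$. I would also remark that even if $\CC$ happens to share its edge cycle with some pre-existing $\CC_j \in \mfC$, the argument is unaffected: the two are treated as distinct indexed entries and the sum above already accounts for both contributions.

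Since the argument is essentially a one-line unpacking of the definition of the residual network together with \Cref{def:cycle-flow}, there is no serious obstacle. The only conceptual point worth flagging is that \Cref{def:balance} plays no role, so the proposition is valid for arbitrary, possibly unbalanced, networks.
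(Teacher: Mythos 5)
Your proof is correct and follows essentially the same argument as the paper: both verify the capacity condition \eqref{eq:multi-flow-cond} for the enlarged collection by unpacking the definition of the residual capacity and using \Cref{def:cycle-flow} for $\CC$ in $\Res(\CN,\mfC)$. Your explicit case split on $e \in E_\CC$ versus $e \notin E_\CC$ is handled in the paper by the indicator $u(e)$, and your remarks on removed edges and on balance being irrelevant are accurate but not needed for the argument.
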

\begin{proof}
    Denoting $\mfC = \{\CC_i = (E_i;\lambda_i) \}_{i=1,\ldots,m}$, the capacity of every edge of the residual network $\Res(\CN,\mfC)$ is given by
    \begin{align}
        c(e) - \sum_{i=1}^m u_i(e)\lambda_i,
    \end{align}
    where $u_i(e) = 1$ if $e \in E_i$ and 0 otherwise, as was introduced in \Cref{def:multi-flow}. Since $\CC = (E_\CC;\lambda)$ is a cycle flow in $\Res(\CN,\mfC)$, this means by \Cref{def:cycle-flow}
    \begin{align}
        c(e) - \sum_{i=1}^m u_i(e)\lambda_i \geq \lambda \quad\text{for every $e \in E_\CC$}.
    \end{align}
    It is trivial then to see that
    \begin{align}
        c(e) \geq u(e) \lambda + \sum_{i=1}^m u_i(e)\lambda_i,
    \end{align}
    where $u(e) = 1$ if $e \in E_\CC$ and 0 otherwise. This proves that $\mfC \cup \CC$ is a cycle flow in the original network $\CN$.
\end{proof}

Furthermore, for the special case where the network $\CN$ is balanced, we have the following nice property.

\begin{prop}\label{prop:res}
    Given a balanced network $\CN=(V,E,c)$ and a multi-cycle flow $\mfC$, the residual network $\Res(\CN,\mfC)$ is a balanced network. Furthermore, there exists a multi-cycle flow $\mfC$ such that the residual network has $c(e) = 0$ for every $e \in E$.
\end{prop}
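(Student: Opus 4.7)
The plan is to dispatch the two claims separately. For the first claim, I would expand the definition of the residual network's capacity at an arbitrary vertex $v \in V$ and check both the in-sum and the out-sum. Writing $\mfC = \{\CC_i = (E_i;\lambda_i)\}_{i=1,\ldots,m}$, the in-capacity at $v$ in $\Res(\CN,\mfC)$ is
\begin{align}
    \sum_{e\in E:\,t(e)=v}\!\!\bigg(c(e) - \sum_{i=1}^m u_i(e)\lambda_i \bigg)
    = \sum_{e\in E:\,t(e)=v}\!\! c(e) - \sum_{i=1}^m \lambda_i \!\!\sum_{e\in E_i:\,t(e)=v}\!\! 1,
\end{align}
and an analogous expression holds for the out-capacity. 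The key observation is that for each simple cycle flow $\CC_i$, every occurrence of $v$ along the edge cycle $E_i$ contributes exactly one incoming and one outgoing edge (since $t(e_k)=s(e_{k+1})$), so $\sum_{e \in E_i:\, t(e)=v} 1 = \sum_{e \in E_i:\, s(e)=v} 1$. Combined with the balance of $\CN$, this forces the in- and out-capacities at $v$ in $\Res(\CN,\mfC)$ to coincide, proving balance.

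For the second claim, my approach is iterative: I would build $\mfC$ by greedily peeling off simple cycle flows until no positive-capacity edges remain, relying on the first part to keep the residual balanced throughout. The engine is the following sub-lemma: in any balanced network with at least one edge of positive capacity, there exists a simple cycle flow of positive magnitude. To prove it, pick any edge $e_1$ with $c(e_1)>0$ and construct a walk $v_0 \to v_1 \to v_2 \to \cdots$ where $v_0 = s(e_1)$, $v_1=t(e_1)$, and at each step, from the current vertex $v_k$ we pick an outgoing edge of positive capacity. This is possible because the network is balanced: at $v_k$ the positive incoming edge just traversed forces a positive outgoing capacity, and so at least one outgoing edge must have positive capacity.

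Since $V$ is finite, the walk eventually revisits a vertex; stopping at the \emph{first} such revisit, say $v_j = v_i$ with $i<j$, the vertices $v_i,v_{i+1},\ldots,v_{j-1}$ are all distinct, so the traversed edges $e_{i+1},\ldots,e_j$ are pairwise distinct (here I use the assumption that there is at most one directed edge from any vertex to any other). This gives a valid edge cycle; setting $\lambda = \min_{i<k\le j} c(e_k) > 0$ produces the desired simple cycle flow. Using \Cref{prop:adding-flows}, adjoining it to the current multi-cycle flow yields a new multi-cycle flow, and the residual loses at least one positive-capacity edge (the one achieving the minimum). Iterating, after finitely many steps the residual has $c(e)=0$ for every $e\in E$, and the union of all the simple cycle flows produced is the required $\mfC$.

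The main obstacle I anticipate is the sub-lemma: ensuring that the walk extension step is always well-defined and that we actually extract a \emph{simple} cycle (no repeated edges) rather than merely a closed walk. The distinctness of edges hinges crucially on taking the first revisited vertex and on the ``at most one directed edge between any two vertices'' convention from the paragraph defining $\CG$; both are minor but worth flagging explicitly. Everything else is bookkeeping.
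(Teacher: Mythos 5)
Your proposal is correct and follows essentially the same route as the paper: balance of the residual follows from each simple cycle contributing equally to the in- and out-capacity at every vertex, and the second claim is proved by greedily extracting positive-magnitude simple cycles (each removing at least one positive-capacity edge) until the residual is empty. Your write-up actually supplies the details the paper leaves implicit — the explicit walk construction, the first-revisit argument guaranteeing edge-distinctness, and the invocation of \Cref{prop:adding-flows} — so nothing is missing.
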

\begin{proof}
    Because the multi-cycle flow $\mfC$ is balanced, this means for every $v \in V$, the total capacity reduction of the edges going into $v$ equals the total capacity reduction of the edges coming out of $v$, which proves the claim $\Res(\CN,\mfC)$ is balanced. 

    To prove the second part of the claim, consider some multi-cycle flow $\mfC'$ in $\CN$. It is obvious that the capacity of every edge in $\CN_1 \coloneqq \Res(\CN,\mfC')$ is less than or equal to that in $\CN$, with the capacity of at least one edge in $\CN_1$ being strictly less than that in $\CN$. If $c(e) > 0$ for some $e \in E$ in $\CN_1$, we can construct a simple cycle $\CC_1$ containing $e$ by first constructing an edge cycle $E_{\CC_1}$ explicitly starting from $e$; this is always possible because $\CN_1$ is balanced. The magnitude of $\CC_1$ is then given to be $\min_{e \in E_{\CC_1}} c(e)$. Notice this means the residual network $\CN_2 \coloneqq \Res(\CN_1,\CC_1)$ has at least one fewer edge than $\CN_1$. We can then repeat this process with $\CN_1$ and $\CN$ and $\CN_2$ in place of $\CN_1$ and obtain a simple cycle flow $\CC_2$. Repeating this until the resulting residual network has no edges, i.e. $c(e) = 0$ for all edges, it then follows $\mfC \coloneqq \mfC' \cup \CC_1 \cup \cdots \cup \CC_n$ is a multi-cycle flow where the residual network $\Res(\CN,\mfC)$ has $c(e)=0$ for every edge.
\end{proof}

We now want to specialize even further and study a particular subclass of balanced networks that satisfies a crucial property known as nesting, which is the subclass of balanced networks where for any two disjoint subset of boundary vertices $\pII,\pJJ$, there exists a multi-cycle flow $\mfC(\pII,\pJJ)$ that simultaneously maximizes the flux through both $\pII$ and $\pII\pJJ$. Not every balanced network obeys the nesting condition, and we will prove a necessary and sufficient condition for nesting to hold.\footnote{This is markedly different from the universal existence of nested flows for conventional flows involving sources and sinks (e.g., see \cite{Freedman:2016zud}). It would be interesting to further explore connections between cycle flows and conventional flows, especially in the context of multi-commodity flows.} %
As we will see in the next section, the nesting property will be repeatedly used to prove that \eqref{eq:max-flow} obeys both SA and SSA, thereby motivating us to identify it with the entropy.

\begin{theorem}\label{thm:nesting}
    Given a balanced network $\CN = (V,E,c)$ with boundary vertices $\p V \subseteq V$, and given two disjoint subsets of boundary vertices $\pII,\pJJ$ where $\pII \cap \pJJ = \oslash$, let $\pKK \coloneqq \p V \setminus \pII\pJJ$ denote the rest of the boundary vertices. Then there exists a multi-cycle flow $\mfC(\pII,\pJJ)$ with respect to the boundary vertex partition $\p V = \{\pII,\pJJ,\pKK\}$, called a \emph{nested multi-cycle flow}, whose magnitude through $\pII$ and $\pII\pJJ$ is simultaneously maximized if and only if there exists a multi-cycle flow $\mfC(\pII\pJJ)$ maximizing the flux through $\pII\pJJ$, a multi-cycle flow $\mfC(\pII)$ maximizing the flux through $\pII$, and a multi-cycle flow $\mfC$ that is a multi-cycle subflow of both $\mfC(\pII\pJJ)$ and $\mfC(\pII)$ such that $\pII$ and $\pKK$ are disjoint in $\Res(\CN,\mfC)$.
\end{theorem}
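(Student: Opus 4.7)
The plan is to prove the two directions of the biconditional separately: the forward implication $(\Rightarrow)$ will follow by a short contradiction argument, while the backward implication $(\Leftarrow)$ requires a constructive gluing that exploits the strongly-connected-component (SCC) structure of balanced networks.

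For the forward direction, assume a nested flow $\mfC(\pII,\pJJ)$ exists. I would take $\mfC(\pII\pJJ) = \mfC(\pII) = \mfC \coloneqq \mfC(\pII,\pJJ)$, which makes the common-subflow requirement trivial and the two maximality conditions automatic. To verify disjointness of $\pII$ and $\pKK$ in $\Res(\CN,\mfC)$, suppose to the contrary that some simple cycle $\CC$ of positive magnitude in the residual visits both some $v_\pII \in \pII$ and some $v_\pKK \in \pKK$. Then by \Cref{prop:adding-flows}, $\mfC \cup \CC$ is a valid multi-cycle flow in $\CN$. Since $v_\pKK \in \pKK \subseteq \pII^c$ and also $v_\pKK \in \pKK = (\pII\pJJ)^c$, the cycle $\CC$ strictly increases both $f(\pII)$ and $f(\pII\pJJ)$ relative to $\mfC$, contradicting their joint maximality.

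For the backward direction, the plan is constructive and proceeds in three steps. \emph{Step 1 (SCC decomposition).} Since $\Res(\CN,\mfC)$ is balanced by \Cref{prop:res}, I would first show that every positive-capacity edge lies on some simple cycle: summing the balance identity over any SCC of the condensation forces source SCCs to have no outgoing between-SCC edges, so the edge set of the residual partitions cleanly along SCCs. Combined with the disjointness hypothesis, this places $\pII$ and $\pKK$ in distinct SCCs with no edges between them. \emph{Step 2 (partial max flows).} Using $\mfC(\pII)\setminus\mfC$ as a witness multi-cycle flow in $\Res(\CN,\mfC)$, I would deduce that the residual max flux through $\pII$ equals $\max f(\pII) - f_\mfC(\pII)$, realized (after trimming non-contributing cycles) by some $\mfC_1$ whose simple cycles touch only $\pII$ and $\pJJ$, since the $\pII$-SCC cannot reach $\pKK$. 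An analogous construction using $\mfC(\pII\pJJ)\setminus\mfC$ produces an $\mfC_2$ whose cycles touch only $\pJJ$ and $\pKK$ and achieves the residual max for $f(\pII\pJJ)$. \emph{Step 3 (assembly).} Because $\mfC_1$ and $\mfC_2$ are supported on disjoint SCCs, they occupy disjoint edges, so $\mfC_1 \cup \mfC_2$ respects the residual capacity constraint. By \Cref{prop:adding-flows}, $\mfC(\pII,\pJJ) \coloneqq \mfC \cup \mfC_1 \cup \mfC_2$ is then a valid multi-cycle flow in $\CN$ whose fluxes through $\pII$ and through $\pII\pJJ$ both attain their maximal values, as required.

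The main obstacle is the first step of the backward direction: upgrading the combinatorial hypothesis ``no simple cycle traverses both $\pII$ and $\pKK$'' to the stronger geometric statement that the $\pII$- and $\pKK$-portions of the residual graph are edge-disjoint. This upgrade relies essentially on balance — without it, bridge edges shared between the two sides could exist despite the absence of a common cycle, breaking the clean SCC decomposition on which the remainder of the argument depends. Once the SCC partition is established, the rest amounts to bookkeeping with the residual-network identities derived earlier in the section.
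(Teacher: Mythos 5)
Your proof is correct and follows essentially the same route as the paper's: the forward direction by augmenting $\mfC(\pII,\pJJ)$ with a residual cycle between $\pII$ and $\pKK$ to contradict maximality, and the backward direction by gluing two residual flows (one touching only $\pII,\pJJ$, one touching only $\pJJ,\pKK$) on top of $\mfC$. If anything, your backward direction is tighter than the paper's: the SCC decomposition of the balanced residual and the witness subflows $\mfC(\pII)\setminus\mfC$ and $\mfC(\pII\pJJ)\setminus\mfC$ make explicit two steps (edge-disjointness of the two glued flows, and the identification of the residual max fluxes) that the paper only asserts.
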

\begin{proof}
    First, let us prove the forward direction, and assume there exists a nested multi-cycle flow $\mfC(\pII,\pJJ)$ that maximizes the flux through both $\pII$ and $\pII\pJJ$ simultaneously. We can then choose $\mfC(\pII) = \mfC(\pII\pJJ) = \mfC(\pII,\pJJ)$. Then $\mfC(\pII,\pJJ)$ is trivially a multi-cycle subflow of both $\mfC(\pII)$ and $\mfC(\pII\pJJ)$, so we can also let $\mfC = \mfC(\pII,\pJJ)$. We now claim that $\pII$ and $\pKK$ are disjoint in the residual network $\Res(\CN,\mfC)$. To prove this claim, suppose the contrary and assume there is still a path between $\pII$ and $\pKK$. Since $\CN_2$ is balanced by \Cref{prop:res}, there still exists a simple cycle $\CC$ between $\pII$ and $\pKK$ (possibly including $\pJJ$) in $\Res(\CN,\mfC)$. This means we can augment $\mfC(\pII,\pJJ)$ by $\CC$ to increase the flux through $\pII$, contradicting the fact $\mfC(\pII,\pJJ)$ maximizes the flux through $\pII$. This proves $\pII$ and $\pKK$ must be disjoint, thereby proving the forward direction.

    Next, we prove the reverse direction, and suppose there exists such a multi-cycle subflow $\mfC$. In particular, this means in the residual network $\Res(\CN,\mfC)$, any vertex in $\pJJ$ that is part of a cycle with a vertex in $\pKK$ cannot be connected to any vertices in $\pII$, and similarly any vertex in $\pJJ$ that is part of a cycle involving vertices in $\pII$ cannot be connected to any vertex in $\pKK$. It follows that we can simultaneously construct multi-cycle flows $\mfC_{\pII\pJJ}$ and $\mfC_{\pJJ\pKK}$ on $\Res(\CN,\mfC)$ such that $\mfC_{\pII\pJJ}$ only involves cycles that maximize the flux between $\pII$ and $\pJJ$, and  $\mfC_{\pJJ\pKK}$ only involves cycles that maximize the flux between $\pJJ$ and $\pKK$. Thus, $\mfC \cup \mfC_{\pII\pJJ} \cup \mfC_{\pJJ\pKK}$ is a valid multi-cycle flow in the original network $\CN$, and it utilizes all the capacities of edges between any two subsets $\pII,\pJJ$, and $\pKK$.\footnote{There might be additional cycle flows within a single subset, say $\pII$, but such cycle flows do not contribute to the flux through $\pII,\pJJ$, or $\pKK$ and hence can be ignored.} It should then be clear that the flux through $\mfC(\pII\pJJ)$ equals that through $\mfC \cup \mfC_{\pJJ\pKK}$, as $\mfC_{\pII\pJJ}$ does not contribute to the flux through $\pII\pJJ$. Similarly, the flux through $\mfC(\pII)$ must equal that through $\mfC \cup \mfC_{\pII\pJJ}$, as $\mfC_{\pJJ\pKK}$ does not contribute to the flux through $\pII$. It follows that $\mfC \cup \mfC_{\pJJ\pKK} \cup \mfC_{\pII\pJJ}$ is precisely a multi-cycle flow that simultaneously maximizes the flux through both $\pII$ and $\pII\pJJ$, completing the proof in the reverse direction as well.
\end{proof}

We conclude this section by restricting ourselves to studying the balanced networks that satisfy \Cref{thm:nesting}, which we define below.
\begin{defn}\label{def:nesting}
    A balanced network $\CN = (V,E,c)$ obeys nesting if for any pair of boundary vertex subsets $\pII,\pJJ$, there exists a multi-cycle flow $\mfC(\pII,\pJJ)$ that simultaneously maximizes the flux through both $\pII$ and $\pII\pJJ$. We will call such a network a nesting balanced network.
\end{defn}
Note that not every balanced network is a nesting balanced network. An example of this is given in \Cref{fig:not-nesting}. The maximal flux through $BC$ is 3, and the maximal flux through $ABCD$ is also 3. However, the multi-cycle flow depicted in \Cref{fig:not-nesting} is the only one that maximizes the flux through $ABCD$. However, it does not maximize the flux through $BC$, and so nesting fails for this balanced network. In fact, notice that this network violates SSA, as it is straightforward to check
\begin{align}
    \S(ABC) + \S(BCD) = 3+2 \ngeq \S(BC) + \S(ABCD) = 3+3.
\end{align}
In the next section, we will prove specifically for nesting balanced networks, SA and SSA are always satisfied.

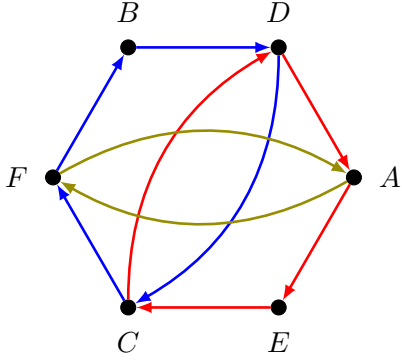
\begin{figure}[H]
    \centering
    \begin{tikzpicture}
        \node[style={draw, circle, fill=black, inner sep=2pt}] (A) at (2, 0) {};
        \node[style={draw, circle, fill=black, inner sep=2pt}] (F) at (-2, 0) {};
        \node[style={draw, circle, fill=black, inner sep=2pt}] (D) at (1, 1.73) {};
        \node[style={draw, circle, fill=black, inner sep=2pt}] (B) at (-1, 1.73) {};
        \node[style={draw, circle, fill=black, inner sep=2pt}] (E) at (1, -1.73) {};
        \node[style={draw, circle, fill=black, inner sep=2pt}] (C) at (-1, -1.73) {};

        \node[right=2mm] at (A) {$A$};
        \node[above=2mm] at (B) {$B$};
        \node[below=2mm] at (C) {$C$};
        \node[above=2mm] at (D) {$D$};
        \node[below=2mm] at (E) {$E$};
        \node[left=2mm] at (F) {$F$};
        
        \draw[-latex, line width=1pt, draw=red] (A) -- node[left] {} (E);
        \draw[-latex, line width=1pt, draw=red] (E) -- node[above] {} (C);
        \draw[-latex, line width=1pt, draw=red] (C) to[bend left=30] node[] {} (D);
        \draw[-latex, line width=1pt, draw=red] (D) -- node[above] {} (A);
        \draw[-latex, line width=1pt, draw=blue] (D) to[bend left=30] node[above] {} (C);
        \draw[-latex, line width=1pt, draw=blue] (C) -- node[] {} (F);
        \draw[-latex, line width=1pt, draw=blue] (F) -- node[] {} (B);
        \draw[-latex, line width=1pt, draw=blue] (B) -- node[] {} (D);
        \draw[-latex, line width=1pt, draw=olive] (A) to[bend left=30] node[above] {} (F);
        \draw[-latex, line width=1pt, draw=olive] (F) to[bend left=30] node[above] {} (A);
    \end{tikzpicture}
    \caption{A balanced network, where all edges have capacity 1, that does not satisfy the nesting property. We have $\S(BC) = \S(ABCD) = 3$. Each colored cycle denotes a simple cycle in the only multi-cycle flow that maximizes the flux through $ABCD$, and it does not simultaneously maximize the flux through $BC$. Note that this directed network is planar (although we did not draw it in a manifestly planar manner for aesthetic reasons), so planarity of the network does not imply nesting.}\label{fig:not-nesting}
\end{figure}

\section{Cycle Flow Entropy Cone}\label{sec:entropy-cone}

In this section, we give a prescription for using the fluxes of the cycle flows of networks to represent entropies. This is introduced in \Cref{ssec:entropy}. We will then prove that the entropies defined via this prescription satisfy SA in \Cref{ssec:sa} and SSA in \Cref{ssec:ssa} by using various properties of minimal cycle flows.

\subsection{Entropy as Cycle Flows}\label{ssec:entropy}

Given a nesting balanced network $\CN=(V,E,c)$, we designate a subset of its vertices $\p V = [\N+1] \subseteq V$, which we call boundary vertices, such that they each correspond to the party $A_i$ for $i \in [\N+1]$ in our system, with $A_{\N+1}$ being the purifier. We will then identify the entropy of the parties $\bigcup_{i\in\pII} A_i$ to be the maximum magnitude of a multi-cycle flow through $\pII \subseteq \p V$, as defined in \eqref{eq:max-flow}. In other words, recalling \eqref{eq:max-flow}, we define
\begin{align}\label{eq:entropy-def}
    \text{Entropy of $\bigg( \bigcup_{i\in\pII} A_i$} \bigg) := \S(\pII) = \max_{\mfC} f_{\mfC}(\pII,\pII^c).
\end{align}

Before we check various properties of this definition, it is useful to give some examples of how to associate an entropy vector to a nesting balanced network. First, we claim that the prescription above describes the entropy vectors of all holographic states. To see why, notice that any undirected graph $\CG = (V,E)$ with weighted edges can be represented as a balanced network such that the maximum flow out of any subset of vertices remain unchanged. For every edge $e \in E$ with capacity $c(e)$ connecting vertices $v_1$ and $v_2$, we can replace it with two directed edges such that one directed edge goes from $v_1$ to $v_2$ with capacity $c(e)$ and the other goes from $v_2$ to $v_1$ with the same capacity $c(e)$ (see \Cref{fig:GtoN}). The nesting property is then inherited from the nesting property of conventional flows. This gives rise to a nesting balanced network $\CN$, and the flow out of any vertex stays the same as that of the original graph $\CG$. 

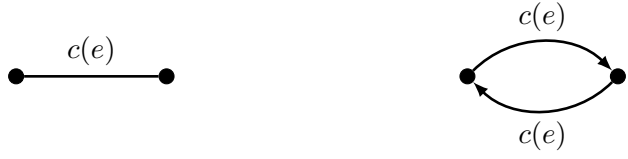
\begin{figure}[H]
    \centering
    \begin{tikzpicture}[scale=1, node distance=2cm]
        \node[style={draw, circle, fill=black, inner sep=2pt}] (A) at (0, 0) {};
        \node[style={draw, circle, fill=black, inner sep=2pt}] (B) at (2, 0) {};
        \draw[line width=1pt] (A) -- node[above] {$c(e)$} (B);
        
        \begin{scope}[xshift=6cm]
            \node[style={draw, circle, fill=black, inner sep=2pt}] (C) at (0, 0) {};
            \node[style={draw, circle, fill=black, inner sep=2pt}] (D) at (2, 0) {};
            \draw[-latex, line width=1pt] (C) to[bend left=45] node[above] {$c(e)$} (D);
            \draw[-latex, line width=1pt] (D) to[bend left=45] node[below] {$c(e)$} (C);
        \end{scope}
    \end{tikzpicture}
    \caption{The left graph represents the entropy of a Bell pair in the original graph model. We can equally represent the entropy vector using cycle flows via the network on the right.} \label{fig:GtoN}
\end{figure}

We now prove that the cycle flow model we are proposing in \eqref{eq:entropy-def} is sufficient to describe all original holographic entropy vectors describable by graph models.

\begin{prop}\label{prop:holographic-generalize}
    All holographic entropy vectors can be represented using the max cycle flux of nesting balanced networks given in \eqref{eq:entropy-def}.
\end{prop}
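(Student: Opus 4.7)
The plan is to realize each holographic entropy vector as the max cycle flux vector of an explicitly constructed nesting balanced network obtained by doubling the edges of the standard undirected graph model. By the graph model theorem of \cite{Bao:2015bfa}, any holographic entropy vector is the vector of min cuts of some weighted undirected graph $\CG = (V,E)$ with boundary vertex subset $\p V$. I would construct the network $\CN = (V, \tilde E, \tilde c)$ exactly as illustrated in \Cref{fig:GtoN}: replace each undirected edge $e$ of capacity $c(e)$ between vertices $u$ and $v$ with two oppositely oriented directed edges $e^+ \colon u \to v$ and $e^- \colon v \to u$, each of capacity $c(e)$. This network is manifestly balanced at every vertex, since each incident undirected edge contributes $c(e)$ to both the ingoing and outgoing capacity.

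Next I would show that for every $\pII \subseteq \p V$, the max cycle flux $\S(\pII)$ in $\CN$ equals the min cut value of $\pII$ in $\CG$. For the upper bound, I pick any min cut $(S, S^c)$ of $\CG$ separating $\pII$ from $\pII^c$; every simple cycle $\CC$ contributing to $f_\mfC(\pII, \pII^c)$ must cross this cut and therefore uses magnitude $\lambda_\CC$ worth of capacity on the directed $S \to S^c$ edges of $\CN$, whose total capacity equals that of the undirected cut. For the matching lower bound, I decompose a max conventional flow from $\pII$ to $\pII^c$ in $\CG$ into simple flow paths $P_1,\ldots,P_k$ of magnitudes $\lambda_1,\ldots,\lambda_k$ summing to the min cut value, choosing the decomposition so that each undirected edge carries flow in a single direction. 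In $\CN$, each $P_i$ becomes a directed path along the $e^+$ copies aligned with its flow orientation, and I close it into a simple cycle $\CC_i$ by appending the reverse traversal via the oppositely oriented $e^-$ copies. Each undirected edge $e$ contributes at most its undirected flow value to the usage of each of $e^\pm$, so the collection is a valid multi-cycle flow whose flux through $\pII$ equals the min cut value.

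Finally I would verify that $\CN$ obeys the nesting condition of \Cref{def:nesting}. Here I invoke the classical nested-flow property for undirected networks: for any disjoint $\pII, \pJJ \subseteq \p V$ with $\pKK = \p V \setminus \pII\pJJ$, there exist simultaneous max flows from $\pII$ to $\pII^c$ and from $\pII\pJJ$ to $\pKK$ that share a common sub-flow consisting of paths between $\pII$ and $\pKK$, together with disjoint extensions by paths between $\pII$ and $\pJJ$ (for the first flow) and between $\pJJ$ and $\pKK$ (for the second). Applying the cycle-closing construction of the previous step to each of these three batches of paths yields a single multi-cycle flow in $\CN$ that simultaneously attains $\S(\pII)$ and $\S(\pII\pJJ)$, which by \Cref{thm:nesting} certifies the nesting property.

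The hard part will be the last step: I must carefully check that the three batches of cycles built from the nested conventional flow remain mutually compatible as a single multi-cycle flow in $\CN$, i.e., that their combined usage on every directed edge $e^\pm$ stays within $c(e)$. This reduces to translating the non-interference properties of nested undirected flows into the doubled directed setting, which should follow from choosing the flow decomposition so that no undirected edge carries flow in opposing directions across the three batches, but warrants careful verification. If this check goes through, combining it with the min cut equivalence established in the second step yields the proposition.
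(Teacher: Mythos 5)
Your proposal is correct and follows essentially the same route as the paper: double each undirected edge into a pair of oppositely oriented directed edges (Figure~3), close each max-flow path into a cycle by returning along the reverse copies, and inherit nesting from the nested-flow property of conventional undirected flows. You are more explicit than the paper about the matching upper bound via a min cut and about verifying edge-capacity compatibility of the three batches of cycles in the nesting step, but these are refinements of the same argument rather than a different approach.
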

\begin{proof}
    Given a holographic entropy vector, it can be represented by some undirected graph $\CG=(V,E)$, with vertices $V$ and weighted edges $E$ \cite{Bao:2015bfa}. In particular, there exists a subset of vertices $\p V = [\N+1] \subseteq V$, which we call boundary vertices, and these vertices represent the $\N$ parties and the purifier. The entropy associated to any subset $\pII \subseteq \p V$ is then maximum flow out of $\pII$ to its complement $\pII^c$, and as we mentioned above such a flow still exists in the directed network constructed from $\CG$ by replacing each undirected edge with a pair of oppositely directed edges shown in \Cref{fig:GtoN}. We can easily augment the flow into a cycle flow by simply using the edges going in the opposite direction to go from $\pII^c$ back to the original vertices $\pII$. The cycle flux in this case just equals the flux out of $\pII$, so the max flux out of $\pII$ in $\CG$ equals the max cycle flux out of $\pII$ in the corresponding network, completing the proof.
\end{proof}

However, our goal is to move beyond holographic entropy vectors, and it is instructive to see an example. Consider the balanced network in \Cref{fig:GHZ3}, where the vertices are labeled by $A,B,C$, and purifier $O$, and all edges have capacity $1$.

\begin{figure}[H]
    \centering
    \begin{tikzpicture}
        \node[style={draw, circle, fill=black, inner sep=2pt}] (O) at (0, 0) {};
        \node[style={draw, circle, fill=black, inner sep=2pt}] (C) at (2, 0) {};
        \node[style={draw, circle, fill=black, inner sep=2pt}] (B) at (2, 2) {};
       \node[style={draw, circle, fill=black, inner sep=2pt}] (A) at (0, 2) {};

        \node[below=2mm] at (O) {$O$};
        \node[below=2mm] at (C) {$C$};
        \node[above=2mm] at (B) {$B$};
        \node[above=2mm] at (A) {$A$};
        
        \draw[-latex, line width=1pt] (A) -- node[above] {$e_1$} (B);
        \draw[-latex, line width=1pt] (B) -- node[right] {$e_2$} (C);
        \draw[-latex, line width=1pt] (C) -- node[below] {$e_3$} (O);
        \draw[-latex, line width=1pt] (O) -- node[left] {$e_4$} (A);
    \end{tikzpicture}
    \caption{A balanced network, where every edge implicitly has capacity 1, representing a 4-party GHZ state.}\label{fig:GHZ3}
\end{figure}
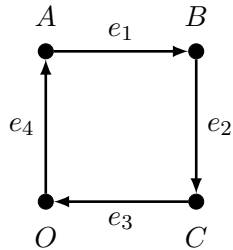

For this network, every vertex is a boundary vertex. Furthermore, given any subset of boundary vertices, the multi-cycle flow with maximum magnitude through any subset of boundary vertices is the minimal cycle flow $\CC = (\{e_1,e_2,e_3,e_4\};1)$, which is the clockwise flow with magnitude $1$. Thus, this network trivially satisfies nesting. By our definition, the magnitude of this cycle flow through all subsets of boundary vertices is
\begin{align}\label{eq:ghz3}
    \S(A) = \S(B) = \S(C) = \S(AB) = \S(AC) = \S(BC) = \S(ABC) = 1.
\end{align}
This set of entropies precisely matches those of the 4-party GHZ state. The GHZ state is not holographic, since it clearly violates MMI \eqref{eq:mmi}. Thus, we see that by identifying maximum flux associated to cycle flows as entropy, we are able to obtain entropy vectors of non-holographic states.\footnote{It was proven in Theorem 5 of \cite{Cui:2018dyq} that for directed inner-superbalanced networks (which includes balanced networks), if we identify the max flow out of a subset of boundary vertices $\pII$ as the entropy $S_\pII$, then MMI is obeyed. However, since we identify entropy $S_\pII$ not as the max flow out of $\pII$, but as the max flux of a multi-cycle flow between $\pII$ and $\pII^c$, the results of \cite{Cui:2018dyq} do not apply here. For instance, the standard max flow out of vertices $\{A,C\}$ in \Cref{fig:GHZ3} is $2$, whereas the max flux of a cycle flow through $\{A,C\}$ is $1$.}

\begin{figure}[H]
    \centering
    \begin{tikzpicture}
        \node[style={draw, circle, fill=black, inner sep=2pt}] (A) at (0, 2) {};
        \node[style={draw, circle, fill=black, inner sep=2pt}] (B) at (1.88, 0.37) {};
        \node[style={draw, circle, fill=black, inner sep=2pt}] (C) at (-1.88,0.37) {};
        \node[style={draw, circle, fill=black, inner sep=2pt}] (D) at (1.18,-1.62) {};
        \node[style={draw, circle, fill=black, inner sep=2pt}] (E) at (-1.18,-1.62) {};

        \node[above=2mm] at (A) {$A_1$};
        \node[right=2mm] at (B) {$A_2$};
        \node[left=2mm] at (C) {$A_{n}$};
        \node[below=2mm] at (D) {$A_3$};
        \node[below=2mm] at (E) {$A_{n-1}$};

        \draw[-latex, line width=1pt] (A) -- (B);
        \draw[-latex, line width=1pt] (B) -- (D);
        \draw[loosely dashed, line width=1pt] (D) -- (E);
        \draw[-latex, line width=1pt] (E) -- (C);
        \draw[-latex, line width=1pt] (C) -- (A);
    \end{tikzpicture}
    \caption{An $n$-party GHZ state, which is also representable as a hypergraph.}
    \label{fig:GHZN}
\end{figure}
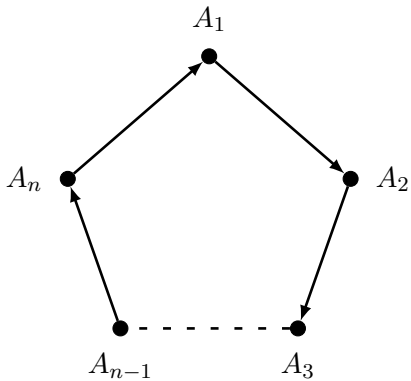

From the above example, we can easily construct a nesting balanced network corresponding to a $n$-party GHZ state. Labeling the regions $A_1,\ldots,A_{n}$,  we depict this state in \Cref{fig:GHZN}. As in the 4-party GHZ network, every vertex is a boundary vertex, and the multi-cycle flow maximizing the magnitude through any subset of boundary vertices is again the clockwise cycle flow with magnitude $1$. Thus we find
\begin{align}
    \S(\pII) = 1 \quad\text{for any $\pII \subset [n]$}.
\end{align}
Notice that this also captures the notion of a general hypergraph edge explored in \cite{Bao:2020zgx}. For instance, we can represent hypergraph 7 given in Figure 4 of \cite{Bao:2020zgx} as a cycle flow in \Cref{fig:hypergraph7} below. We simply replace each 4-vertex hyperedge with the 4-vertex cyclic network shown in \Cref{fig:GHZ3} and every undirected edge with the two directed edges shown in \Cref{fig:GtoN}. We leave it as an exercise for the reader to check that given our prescription \eqref{eq:entropy-def}, we indeed reproduce the entropy vector of the hypergraph. Furthermore, we conjecture below that in general, entropy vectors representable by hypergraphs can be captured by cycle flows.

\begin{figure}[H]
    \centering
    \begin{tikzpicture}
        \node[style={draw, circle, fill=black, inner sep=2pt}] (C) at (0, 0) {};
        \node[style={draw, circle, fill=black, inner sep=2pt}] (A) at (-1, 1) {};
        \node[style={draw, circle, fill=black, inner sep=2pt}] (O) at (1,1) {};
        \node[style={draw, circle, fill=black, inner sep=2pt}] (D) at (1,-1) {};
        \node[style={draw, circle, fill=black, inner sep=2pt}] (i) at (-1,-1) {};
        \node[style={draw, circle, fill=black, inner sep=2pt}] (B) at (-3,-1) {};

        \node[above=2mm] at (A) {$A$};
        \node[left=2mm] at (B) {$B$};
        \node[right=2mm] at (C) {$C$};
        \node[below=2mm] at (D) {$D$};
        \node[above=2mm] at (O) {$O$};

        \draw[-latex, line width=1pt] (A) -- (O);
        \draw[-latex, line width=1pt] (O) -- (D) node[midway, right]{2};
        \draw[-latex, line width=1pt] (D) -- (i) node[midway, below]{2};
        \draw[-latex, line width=1pt] (i) -- (A);
        \draw[-latex, line width=1pt] (C) -- (O);
        \draw[-latex, line width=1pt] (i) -- (C);
        \draw[-latex, line width=1pt] (i) to[bend right=45] (B);
        \draw[-latex, line width=1pt] (B) to[bend right=45] (i);
    \begin{scope}[shift={(-8,0)}]    \node[] at (0,0) {\includegraphics[scale=0.22]{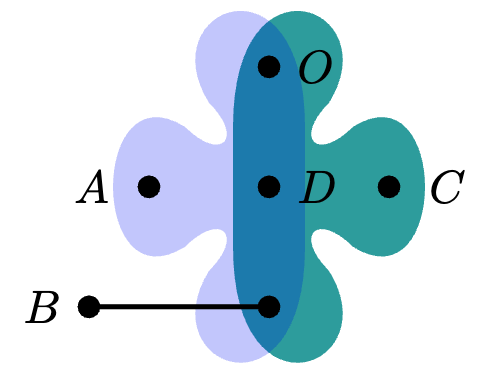}};

    \end{scope}
    \end{tikzpicture}
    
    \caption{The network on the right reproduces the entropy vector represented by hypergraph 7 in Figure 4 of \cite{Bao:2020zgx}, which we reproduced on the left for convenience. All edges have capacity $1$ unless otherwise labeled.}
    \label{fig:hypergraph7}
\end{figure}

\begin{conj}\label{conj:hypergraph-generalize}
    All entropy vectors representable by hypergraphs can be represented using the prescription given in \eqref{eq:entropy-def} on nesting balanced networks.
\end{conj}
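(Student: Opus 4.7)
The plan is to realize every hypergraph entropy vector as a cycle-flow entropy on some nesting balanced network, generalizing the example in \Cref{fig:hypergraph7}. Given a hypergraph $H=(V,E_H,w)$ with boundary vertices $\p V$, the natural first attempt is to construct $\CN(H)$ by replacing each undirected edge $\{u,v\}$ of weight $w$ with the anti-parallel pair of \Cref{fig:GtoN} and each $k$-hyperedge $\{v_1,\ldots,v_k\}$ of weight $w$ with the directed $k$-cycle of \Cref{fig:GHZN}, placing capacity $w$ on every gadget edge. Balance is then immediate at every vertex, and the bound $\S_{\CN(H)}(\pII)\geq\S_H(\pII)$ follows by fixing a hypergraph min-cut partition $T\supseteq\pII$ and running the full capacity $w$ around the gadget cycle of every cut edge or hyperedge; these cycles are edge-disjoint across gadgets and contribute $w$ to $f_\mfC(\pII,\pII^c)$ whenever their gadgets touch boundary vertices on both sides of the cut, a condition one can arrange by choosing $T$ so as to minimize cuts passing only through internal vertices.

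The main obstacle is the reverse inequality $\S_{\CN(H)}(\pII)\leq\S_H(\pII)$, which actually fails for the naive construction whenever additional edges share vertices with a hyperedge gadget. For instance, consider the hypergraph with a single $4$-hyperedge $\{v_1,v_2,v_3,v_4\}$ of weight $w$ together with an undirected edge $\{v_1,v_3\}$ of weight $w$: the naive $\CN(H)$ admits the two edge-disjoint cycles $v_1\to v_2\to v_3\to v_1$ and $v_3\to v_4\to v_1\to v_3$ of total magnitude $2w$, both crossing the bipartition $\{v_1,v_3\}$ versus $\{v_2,v_4\}$, whereas the hypergraph min cut is only $w$. Any proof of the conjecture must therefore refine the hyperedge gadget, and my plan is to replace each of them by a more elaborate structure built around auxiliary internal vertices, with fractional capacities tuned so that every boundary-crossing simple cycle passing through the gadget is forced through a common bottleneck of total capacity $w$.

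The cleanest route I can see for tuning the gadget is via LP duality: the dual of the max-cycle-flow program assigns non-negative weights $y_e$ to the edges of $\CN(H)$ subject to $\sum_{e\in E_\CC} y_e \geq 1$ for every simple cycle $\CC$ that crosses $\pII$, and minimizes $\sum_e c(e)\,y_e$; by strong duality this min equals $\S_{\CN(H)}(\pII)$. I would design $\CN(H)$ together with an explicit feasible dual labeling, extracted from the hypergraph min-cut optimum, whose objective value matches $\S_H(\pII)$, thereby forcing both inequalities. Nesting for the resulting $\CN(H)$ would then follow from \Cref{thm:nesting} by exhibiting a common subflow of the optimal cycle flows for $\pII$ and $\pII\pJJ$ whose residual disconnects $\pII$ from $\pII^c\cap\pKK$, a structural property I expect the symmetries of the refined gadget to guarantee. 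The hardest single step is producing a universal hyperedge gadget realizing exactly $w$ units of cross-cut capacity for every bipartition of its vertices while remaining balanced and composable with other gadgets; this is where most of the technical work would lie, and it is conceivable that the construction must depend globally on $H$ rather than on each hyperedge individually.
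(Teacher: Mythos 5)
This statement is a \emph{conjecture} in the paper; the authors give only the same naive construction you start from (replace each undirected edge by the anti-parallel pair of \Cref{fig:GtoN} and each $k$-hyperedge by a directed $k$-cycle as in \Cref{fig:GHZN}) and explicitly leave the proof as an open question. Measured against that, your proposal does not close the gap: it is a research plan whose decisive steps are deferred. The entire burden of the argument is placed on a ``refined hyperedge gadget'' with auxiliary internal vertices and fractional capacities that forces every boundary-crossing cycle through a bottleneck of capacity $w$ for \emph{every} bipartition of the hyperedge's vertices, but no such gadget is exhibited, and you yourself concede it may have to depend globally on $H$. The LP-duality framing is sound in principle (it mirrors \Cref{app:dual-program}), but writing down a feasible dual labeling requires the gadget to already exist. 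Likewise, nesting is only ``expected'' to follow from \Cref{thm:nesting}; since the paper shows (in \Cref{fig:not-nesting}) that balanced networks built from innocuous-looking directed cycles can fail nesting and even SSA, this step cannot be waved through. So the proposal identifies the right obstacles but proves nothing beyond the lower bound $\S_{\CN(H)}(\pII)\geq\S_H(\pII)$ for the naive network, and even that is only sketched.

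That said, one piece of your proposal is a genuine and correct contribution beyond what the paper records: your counterexample showing the naive construction \emph{overshoots}. With a single $4$-hyperedge on $\{v_1,v_2,v_3,v_4\}$ plus an ordinary edge $\{v_1,v_3\}$, all of weight $w$, the hypergraph min cut for $\{v_1,v_3\}$ is $w$, yet the naive directed network carries the two edge-disjoint crossing cycles $v_1\to v_2\to v_3\to v_1$ and $v_3\to v_4\to v_1\to v_3$ for a flux of $2w$. The paper only notes the failure of the reverse map (the network of \Cref{fig:not-nesting} violates SSA and hence has no hypergraph preimage); your example shows the forward map is also not entropy-preserving whenever gadgets share vertices. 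This sharpens why the conjecture is hard, but it does not resolve it.
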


To understand the rationale behind this conjecture, note that given any hypergraph model, we can replace every undirected edge by two oppositely directed edges shown in \Cref{fig:GtoN}, and every hyperedge involving $n$ vertices by a $n$-cyclic network like the one shown in \Cref{fig:GHZN} with the same capacity for each edge. There is still an ambiguity in the ordering of the vertices and the orientation of each such directed edge cycles replacing the hyperedges, but in many examples considered, we were able to find an orientation where the entropy vector represented by the hypergraph is reproduced by our directed network. Nevertheless, such a naive argument cannot be true in general, as the balanced network considered in \Cref{fig:not-nesting} does not have a hypergraph realization (since hypergraphs obey SSA), even though our above prescription for replacing directed edges with edges and hyperedges would suggest otherwise.\footnote{We thank Massimiliano Rota for pointing this out.} We therefore leave proving the conjecture an open question.

Although we have not proved the above conjecture, the ability for directed networks to represent entropy vectors of quantum states not representable by hypergraphs is illustrated by the balanced network given in \Cref{fig:vector15}. This network is of particular interest as the entropic pattern it realizes cannot be captured by hypergraphs. In other words, it demonstrates that the proposed cycle flow prescription generalizes beyond the hypergraph models of \cite{Bao:2020zgx} (assuming \Cref{conj:hypergraph-generalize}). The entropy vector that this network yields corresponds precisely to that of the stabilizer ``graph state 15'' of \cite{Bao:2020mqq}, which proved that the entanglement structure of such a stabilizer state is unattainable by hypergraphs. It was later shown in \cite{Bao:2021gzu} that the entropy vector of graph state 15 can be obtained by providing a min cut model on topological links while still realizing all hypergraph entropies, albeit obscuring any potentially dual flow perspective. The topological character of the models of \cite{Bao:2021gzu}, and their realization of graph state $15$, served as motivation for our construction of the state using directed networks in \Cref{fig:vector15}, as well for our cycle flow prescription for computing entropy in general. Although proving an equivalence between our prescription and that of \cite{Bao:2021gzu} at the level of entropy cones is beyond the scope of this paper, the fact that we are able to reproduce the subsystem entropies of graph state $15$ does demonstrate that cycle flows share key topological features with link models.

\begin{figure}[H] 
    \centering
    \begin{tikzpicture}
        \node[style={draw, circle, fill=black, inner sep=2pt}] (A) at (2, 2) {};
        \node[style={draw, circle, fill=black, inner sep=2pt}] (B) at (4.23, 2) {};
        \node[style={draw, circle, fill=black, inner sep=2pt}] (C) at (5.23, 0) {};
        \node[style={draw, circle, fill=black, inner sep=2pt}] (D) at (4.23, -2) {};
        \node[style={draw, circle, fill=black, inner sep=2pt}] (E) at (2.23, -2) {};
        \node[style={draw, circle, fill=black, inner sep=2pt}] (O) at (1, 0) {};
        
        \node[style={draw, circle, fill=black, inner sep=2pt}] (V1) at (2, 4) {};
        \node[style={draw, circle, fill=black, inner sep=2pt}] (V2) at (4.23, 4) {};
        \node[style={draw, circle, fill=black, inner sep=2pt}] (V3) at (7.23, -1) {};
        \node[style={draw, circle, fill=black, inner sep=2pt}] (V4) at (6.23, -3) {};
        \node[style={draw, circle, fill=black, inner sep=2pt}] (V5) at (0,-3) {};
        \node[style={draw, circle, fill=black, inner sep=2pt}] (V6) at (-1,-1) {};
        
        \node[above=2mm] at (V1) {$A$};
        \node[above=2mm] at (V2) {$B$};
        \node[right=2mm] at (V3) {$D$};
        \node[right=2mm] at (V4) {$E$};
        \node[left=2mm] at (V5) {$C$};
        \node[left=2mm] at (V6) {$O$};

        \draw[-latex, line width=1pt] (A) -- (B);
        \draw[-latex, line width=1pt] (B) -- node[right=1mm] {$2$} (C);
        \draw[-latex, line width=1pt] (C) -- (D);
        \draw[-latex, line width=1pt] (D) -- node[below=1mm] {$2$} (E);
        \draw[-latex, line width=1pt] (E) -- (O);
        \draw[-latex, line width=1pt] (O) -- node[left=1mm] {$2$} (A);  

        \draw[-latex, line width=1pt] (A) -- (V1);
        \draw[-latex, line width=1pt] (V1) -- (V2);
        \draw[-latex, line width=1pt] (V2) -- (B);
        \draw[-latex, line width=1pt] (C) -- (V3);
        \draw[-latex, line width=1pt] (V3) -- (V4);
        \draw[-latex, line width=1pt] (V4) -- (D);
        \draw[-latex, line width=1pt] (E) -- (V5);
        \draw[-latex, line width=1pt] (V5) -- (V6);
        \draw[-latex, line width=1pt] (V6) -- (O);
    \end{tikzpicture}
    \caption{This network reproduces the entropy vector represented by ``graph state 15'' in Figure 1 of \cite{Bao:2020mqq}. Its associated entropy vector is $(1 1 1 1 1 ; 1 2 2 2 2 2 2 2 2 1 ; 2 2 2 2 2 2 2 2 2 2 ; 2 2 1 2 2 ; 1)$, where we have ordered the components in lexicographic order. }
\label{fig:vector15}
\end{figure}
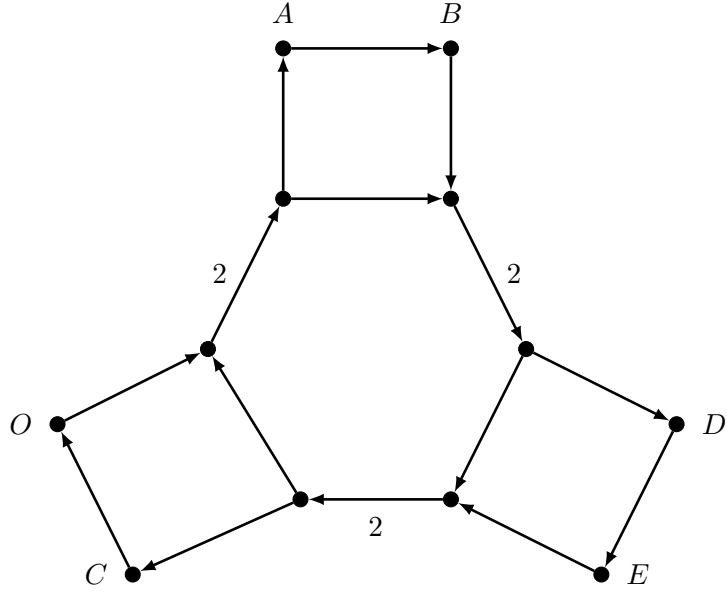

Having illustrated how to compute the maximum flux out of a subset of vertices given a cycle flow, we now turn to the properties of such max cycle flows. It is easy to see the prescription \eqref{eq:entropy-def} obeys purification symmetry.

\begin{prop}\label{lem:purification}
    Given the identification \eqref{eq:entropy-def} between entanglement entropy for a subset of vertices and the maximum flux out of the vertices in a cycle flow, purification symmetry is satisfied.
\end{prop}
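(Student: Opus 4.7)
The claim is essentially immediate from the manifest symmetry of the flux function in its two arguments, so the proof plan is short. My plan is to unwind the definitions and exploit the observation, already recorded in \Cref{def:flux}, that $f_\CC(\pII,\pJJ) = f_\CC(\pJJ,\pII)$ for every simple cycle flow $\CC$.

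First I would write $\S(\pII) = \max_{\mfC} f_\mfC(\pII,\pII^c)$, as given in \eqref{eq:max-flow} and \eqref{eq:entropy-def}, and expand $f_\mfC(\pII,\pII^c) = \sum_i f_{\CC_i}(\pII,\pII^c)$ using \Cref{def:multi-flow}. Then, since $f_{\CC_i}(\pII,\pII^c)$ only depends on whether the underlying edge cycle meets both $\pII$ and $\pII^c$ (which is a symmetric condition in the two arguments), I would observe that $f_{\CC_i}(\pII,\pII^c) = f_{\CC_i}(\pII^c,\pII)$, and hence $f_\mfC(\pII,\pII^c) = f_\mfC(\pII^c,\pII)$ for every multi-cycle flow $\mfC$.

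Finally, identifying $\pII^c = (\pII^c)^c{}^c$, so that $f_\mfC(\pII^c,\pII) = f_\mfC(\pII^c,(\pII^c)^c)$, and taking the maximum over all multi-cycle flows $\mfC$ on both sides yields $\S(\pII) = \S(\pII^c)$, which is exactly purification symmetry \eqref{eq:purification}. The only ``obstacle'' worth pointing out is that the set of admissible multi-cycle flows $\mfC$ being optimized over does not depend on the choice of $\pII$ — it depends only on the underlying network $\CN$ — so taking the max commutes with the swap $\pII \leftrightarrow \pII^c$; this is why the argument reduces to pointwise symmetry of $f_\mfC$. No appeal to balance or nesting is needed for this statement.
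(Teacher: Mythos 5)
Your proof is correct and follows essentially the same route as the paper's: both rest on the symmetry $f_\mfC(\pII,\pII^c)=f_\mfC(\pII^c,\pII)$ inherited from the manifestly symmetric definition of $f_\CC$, together with $(\pII^c)^c=\pII$ and the fact that the set of admissible multi-cycle flows is independent of the choice of $\pII$. Your version just spells out these steps slightly more explicitly than the paper does.
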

\begin{proof}
    Using \eqref{eq:entropy-def}, we have
    \begin{align}
        \S(\pII) = \max_{\mfC} f_{\mfC}(\pII,\pII^c) = \max_{\mfC} f_{\mfC}(\pII^c,\pII) = \S(\pII^c),
    \end{align}
    where we simply used the fact the maximization of flux between $\pII$ and $\pII^c$ is identical to that between $\pII^c$ and $\pII$.
\end{proof}

In the next two subsections, we will show that the prescription \eqref{eq:entropy-def} also satisfies SA and SSA. Although the proofs are more involved, the spirit of the proofs is almost identical to those using bit threads found in \cite{Freedman:2016zud}.

\subsection{Subadditivity}\label{ssec:sa}

We will now prove that our prescription for entropy \eqref{eq:entropy-def} satisfies subadditivity (SA). Our approach follows closely that used in \cite{Freedman:2016zud}. 

\begin{theorem}
    The entropy defined via cycle flows on a nesting balanced network given in \eqref{eq:entropy-def} satisfies subadditivity.
\end{theorem}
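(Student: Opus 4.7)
The plan is to invoke the nesting property to pick a single multi-cycle flow that simultaneously witnesses both $\S(\pII)$ and $\S(\pII\pJJ)$, and then do a cycle-by-cycle bookkeeping argument to bound the flux through $\pII\pJJ$ by the sum of fluxes through $\pII$ and $\pJJ$.

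More precisely, given disjoint boundary vertex subsets $\pII, \pJJ$, let $\pKK := \p V \setminus (\pII \cup \pJJ)$. By \Cref{def:nesting}, there exists a multi-cycle flow $\mfC(\pII,\pJJ)$ that simultaneously maximizes the flux through $\pII$ and through $\pII\pJJ$, so that
\begin{align}
f_{\mfC(\pII,\pJJ)}(\pII, \pII^c) = \S(\pII), \qquad f_{\mfC(\pII,\pJJ)}(\pII\pJJ, \pKK) = \S(\pII\pJJ).
\end{align}
For the same flow we only have the trivial bound $f_{\mfC(\pII,\pJJ)}(\pJJ, \pJJ^c) \leq \S(\pJJ)$, and this is the direction in which the inequality I want has the correct sign.

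The main step is then to show the pointwise inequality $f_\mfC(\pII,\pII^c) + f_\mfC(\pJJ,\pJJ^c) \geq f_\mfC(\pII\pJJ, \pKK)$ for \emph{any} multi-cycle flow $\mfC$. It suffices to prove this for a single simple cycle flow $\CC$ of magnitude $\l$ and sum over the constituents. By \Cref{def:flux}, each of the three fluxes equals either $0$ or $\l$, depending on which of the subsets $\pII, \pJJ, \pKK$ the cycle $\CC$ visits. Partitioning into cases according to the set $V(\CC) \cap \p V$, I would verify that the only way $f_\CC(\pII\pJJ, \pKK) = \l$ is for $\CC$ to visit $\pKK$ together with at least one of $\pII, \pJJ$; in each such case at least one of $f_\CC(\pII, \pII^c)$ or $f_\CC(\pJJ,\pJJ^c)$ is also $\l$, since the other two subsets then lie in the relevant complement. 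Cases where $\CC$ visits only $\pII$ and $\pJJ$ (but not $\pKK$) give $0$ on the right and $2\l$ on the left, which only strengthens the inequality. Summing over cycles in $\mfC$ and then applying this to the nested flow gives
\begin{align}
\S(\pII) + \S(\pJJ) \geq f_{\mfC(\pII,\pJJ)}(\pII,\pII^c) + f_{\mfC(\pII,\pJJ)}(\pJJ,\pJJ^c) \geq f_{\mfC(\pII,\pJJ)}(\pII\pJJ,\pKK) = \S(\pII\pJJ),
\end{align}
which is SA.

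I do not expect any serious obstacle; the combinatorial case check for a single simple cycle is essentially immediate from \Cref{def:flux}. The only conceptual input is the use of nesting to couple the maximizers of $\S(\pII)$ and $\S(\pII\pJJ)$ to the same flow $\mfC$, after which the bound on $\S(\pJJ)$ follows for free from maximality. This is structurally the same argument as the max-flow proof of SA in \cite{Freedman:2016zud}, with nesting playing the role of the simultaneous max-flow property there.
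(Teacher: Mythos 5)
Your proof is correct, but it takes a genuinely different and noticeably leaner route than the paper. The paper works through the conditional entropy $\H(\pII|\pJJ)$ and the mutual information $\I(\pII:\pJJ)$, choosing a \emph{minimal} nested multi-cycle flow $\mfC(\pJJ,\pII)$ (maximizing the flux through $\pJJ$ and $\pII\pJJ$ simultaneously) and then carefully decomposing the sums over its constituent simple cycles according to which boundary subsets they visit, mirroring the bit-thread proof of SA in \cite{Freedman:2016zud}. You instead isolate a pointwise superadditivity of flux, $f_\mfC(\pII,\pII^c)+f_\mfC(\pJJ,\pJJ^c)\geq f_\mfC(\pII\pJJ,\pKK)$, valid cycle by cycle for \emph{every} multi-cycle flow, and your case check of this inequality is complete and correct. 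What this buys is significant: once you have the pointwise inequality, you do not actually need nesting at all. Applying it to any maximizer $\mfC^*$ of $\S(\pII\pJJ)$ gives
\begin{align}
\S(\pII\pJJ) = f_{\mfC^*}(\pII\pJJ,\pKK) \leq f_{\mfC^*}(\pII,\pII^c) + f_{\mfC^*}(\pJJ,\pJJ^c) \leq \S(\pII) + \S(\pJJ),
\end{align}
so SA holds for arbitrary balanced networks (indeed for any network under the prescription \eqref{eq:entropy-def}), which is consistent with the paper's remark in \Cref{app:dual-program} that the dual-program proof of SA requires no nesting assumption. Your invocation of nesting to couple the maximizers of $\S(\pII)$ and $\S(\pII\pJJ)$ is harmless but superfluous; you could strengthen your own statement by dropping it. You also avoid any appeal to minimal cycle flows, which the paper's argument leans on in its sum decompositions.
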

\begin{proof}
    Consider a network with boundary vertices $\p V = \pII \cup \pJJ \cup \pKK$, where $\pII,\pJJ$, and $\pKK$ are mutually disjoint. We will prove
    \begin{align}
        \I(\pII:\pJJ) = \S(\pII) + \S(\pJJ) - \S(\pII\pJJ) \geq 0.
    \end{align}
    To this end, we begin by computing the conditional entropy
    \begin{align}
        \H(\pII|\pJJ) \equiv \S(\pII\pJJ) - \S(\pJJ).
    \end{align}
    To compute $\S(\pII\pJJ)$, let $\mfC(\pJJ,\pII)$ be a minimal multi-cycle flow that maximizes the magnitude through both $\pJJ$ and $\pII\pJJ$ simultaneously; such a multi-cycle flow exists by the definition of a nesting balanced network (see \Cref{def:nesting}). It then follows
    \begin{align}\label{eq:Sij}
    \begin{split}
        \S(\pII\pJJ) &= \sum_{\substack{\CC \in \mfC(\pJJ,\pII) \\ V(\CC) \cap \pJJ \neq \oslash}} f_\CC(\pII\pJJ,\pKK) + \sum_{\substack{ \CC \in \mfC(\pJJ,\pII) \\ V(\CC) \cap \pJJ = \oslash } } f_\CC(\pII\pJJ,\pKK) \\
        &= \sum_{\CC \in \mfC(\pJJ,\pII)} f_\CC(\pJJ,\pKK) + \sum_{\substack{ \CC \in \mfC(\pJJ,\pII) \\ V(\CC) \cap \pJJ = \oslash } } f_\CC(\pII,\pKK) .
    \end{split}
    \end{align}
    To obtain the second equality, we first noted that the first sum can be rewritten as follows:
    \begin{align}\label{eq:reduce}
    \begin{split}
        \sum_{\substack{\CC \in \mfC(\pJJ,\pII) \\ V(\CC) \cap \pJJ \neq \oslash}} f_\CC(\pII\pJJ,\pKK) = \sum_{\substack{\CC \in \mfC(\pJJ,\pII) \\ V(\CC) \cap \pJJ \neq \oslash}} f_\CC(\pJJ,\pKK) = \sum_{\CC \in \mfC(\pJJ,\pII)} f_\CC(\pJJ,\pKK),
    \end{split}
    \end{align}
    where the first equality follows from the fact the minimal simple cycles we are summing over all involve boundary vertices $\pJJ$ already, and the second equality follows from the fact additional minimal simple cycles included in the last sum do not involve $\pJJ$, and hence do not contribute to the sum. As for the second sum in \eqref{eq:Sij}, we used the fact it obeys the equality
    \begin{align}\label{eq:reduce2}
    \begin{split}
        \sum_{\substack{ \CC \in \mfC(\pJJ,\pII) \\ V(\CC) \cap \pJJ = \oslash }} f_\CC(\pII\pJJ,\pKK) = \sum_{\substack{ \CC \in \mfC(\pJJ,\pII) \\ V(\CC) \cap \pJJ = \oslash }} f_\CC(\pII,\pKK),
    \end{split}
    \end{align}
    since the minimal simple cycles we are summing over do not involve boundary vertices $\pJJ$.
    
    Likewise, we have
    \begin{align}
        \S(\pJJ) = \sum_{\CC \in \mfC(\pJJ)} f_\CC(\pJJ,\pII\pKK) = \sum_{\CC \in \mfC(\pJJ,\pII)} f_{\CC}(\pJJ,\pII\pKK) ,
    \end{align}
    where the second equality simply follows from the fact the minimal multi-flow $\mfC(\pJJ,\pII)$ also maximizes the flux between $\pJJ$ and $\pJJ^c = \pII\pKK$. Putting everything together, we obtain 
    \begin{align}\label{eq:cond-ent}
    \begin{split}
        \H(\pII|\pJJ) &= \sum_{\CC \in \mfC(\pJJ,\pII)} f_\CC(\pJJ,\pKK) + \sum_{\substack{ \CC \in \mfC(\pJJ,\pII) \\ V(\CC) \cap \pJJ = \oslash } } f_\CC(\pII,\pKK) - \sum_{\CC \in \mfC(\pJJ,\pII)} f_\CC(\pJJ,\pII\pKK) \\
        &= \sum_{\CC \in \mfC(\pJJ,\pII)} f_\CC(\pJJ,\pKK) + \sum_{\substack{ \CC \in \mfC(\pJJ,\pII) \\ V(\CC) \cap \pJJ = \oslash } } f_\CC(\pII,\pKK) - \left[ \sum_{\substack{ \CC \in \mfC(\pJJ,\pII) \\ V(\CC) \cap \pKK = \oslash }} f_\CC(\pJJ,\pII\pKK) + \sum_{\substack{ \CC \in \mfC(\pJJ,\pII) \\ V(\CC) \cap \pKK \neq \oslash }} f_\CC(\pJJ,\pII\pKK) \right] \\
        &= \sum_{\CC \in \mfC(\pJJ,\pII)} f_\CC(\pJJ,\pKK) + \sum_{\substack{ \CC \in \mfC(\pJJ,\pII) \\ V(\CC) \cap \pJJ = \oslash } } f_\CC(\pII,\pKK) - \left[ \sum_{\substack{ \CC \in \mfC(\pJJ,\pII) \\ V(\CC) \cap \pKK = \oslash }} f_\CC(\pJJ,\pII) + \sum_{\CC \in \mfC(\pJJ,\pII)} f_\CC(\pJJ,\pKK) \right] \\
        &= \sum_{\substack{ \CC \in \mfC(\pJJ,\pII) \\ V(\CC) \cap \pJJ = \oslash } } f_\CC(\pII,\pKK) - \sum_{\substack{ \CC \in \mfC(\pJJ,\pII) \\ V(\CC) \cap \pKK = \oslash }} f_\CC(\pJJ,\pII) , 
    \end{split}
    \end{align}
    where the second equality follows from decomposing last sum in the first line into two disjoint sums, and the third equality follows from relations analogous to \eqref{eq:reduce} and \eqref{eq:reduce2}. Using the definition of mutual information, we then have
    \begin{align}
    \begin{split}
        \I(\pII:\pJJ) &= \S(\pII) - \H(\pII|\pJJ) \\
        &=  \sum_{\CC \in \mfC(\pII,\pJJ)} f_{\CC}(\pII,\pJJ\pKK) - \left[ \sum_{\substack{ \CC \in \mfC(\pJJ,\pII) \\ V(\CC) \cap \pJJ = \oslash } } f_\CC(\pII,\pKK) - \sum_{\substack{ \CC \in \mfC(\pJJ,\pII) \\ V(\CC) \cap \pKK = \oslash }} f_\CC(\pJJ,\pII) \right] \\
        &\geq \sum_{\CC \in \mfC(\pII,\pJJ)} f_{\CC}(\pII,\pJJ\pKK) - \sum_{\substack{ \CC \in \mfC(\pJJ,\pII) \\ V(\CC) \cap \pJJ = \oslash } } f_\CC(\pII,\pKK)  \\
        &\geq \sum_{\CC \in \mfC(\pII,\pJJ)} f_{\CC}(\pII,\pJJ\pKK) - \sum_{\CC \in \mfC(\pJJ,\pII) } f_\CC(\pII,\pJJ\pKK)  \\
        &\geq 0 ,
    \end{split}
    \end{align}
    where the third line follows from the fact $f_\CC(\pJJ,\pII) \geq 0$ for all minimal simple cycles; the fourth line follows from the fact $f_\CC(\pII,\pKK) \leq f_\CC(\pII,\pJJ\pKK)$ for all minimal simple cycles, and also we are no longer restricting the sum with the condition $V(\CC) \cap \pJJ = \oslash$; and the last inequality uses the fact the first sum in the penultimate line is the maximal flux between $\pII$ and $\pII^c$ for any multi-cycle flow, whereas the second sum is the flux between $\pII$ and $\pII^c$ for the particular multi-cycle flow $\mfC(\pJJ,\pII)$. This argument completes the proof of subadditivity.
\end{proof}

\subsection{Strong Subadditivity}\label{ssec:ssa}

We now turn to prove strong subadditivity (SSA). 

\begin{theorem}
    The entropy defined via cycle flows on a nesting balanced network given in \eqref{eq:entropy-def} satisfies strong subadditivity.
\end{theorem}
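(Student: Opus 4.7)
The plan is to parallel the subadditivity proof by using the nesting property to produce a single multi-cycle flow that simultaneously saturates two of the four entropies appearing in the SSA inequality. Let $\pLL \coloneqq \p V \setminus (\pII \cup \pJJ \cup \pKK)$ denote the remaining boundary vertices. Applying \Cref{def:nesting} to the disjoint pair $(\pJJ, \pII\pKK)$, one obtains a multi-cycle flow $\mfC$ satisfying $f_\mfC(\pJJ) = \S(\pJJ)$ and $f_\mfC(\pII\pJJ\pKK) = \S(\pII\pJJ\pKK)$ simultaneously. Since $\mfC$ is still just a valid multi-cycle flow for the other two cuts, one automatically has $f_\mfC(\pII\pJJ) \leq \S(\pII\pJJ)$ and $f_\mfC(\pJJ\pKK) \leq \S(\pJJ\pKK)$.

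The core of the argument is then a per-cycle SSA lemma: for any simple cycle flow $\CC$ of magnitude $\lambda$,
\begin{align}
    f_\CC(\pII\pJJ) + f_\CC(\pJJ\pKK) \geq f_\CC(\pJJ) + f_\CC(\pII\pJJ\pKK).
\end{align}
By \Cref{def:flux}, each $f_\CC(\cdot)$ equals either $\lambda$ or $0$ according to whether $V(\CC)$ meets both the subset in question and its complement in $\p V$. Encoding this in the Boolean indicators $a, b, c, d \in \{0,1\}$ of whether $V(\CC)$ intersects $\pII, \pJJ, \pKK, \pLL$ respectively, the per-cycle inequality reduces to
\begin{align}
    \max(a,b)\max(c,d) + \max(b,c)\max(a,d) \geq b\max(a,c,d) + d\max(a,b,c),
\end{align}
which one verifies case by case over the sixteen assignments in $\{0,1\}^4$: most cases give equality, while $(a,b,c,d) \in \{(1,0,1,0), (1,1,1,0), (1,0,1,1)\}$ give strict inequality. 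Conceptually, this reflects the fact that a single simple cycle supports a GHZ-like correlation on $V(\CC)$, whose $\{0,1\}$-valued subsystem flux $f_\CC(\cdot)/\lambda$ trivially obeys SSA.

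Summing the per-cycle inequality, weighted by the magnitudes of the constituent simple cycles, over all cycles comprising $\mfC$ then yields
\begin{align}
    f_\mfC(\pII\pJJ) + f_\mfC(\pJJ\pKK) \geq f_\mfC(\pJJ) + f_\mfC(\pII\pJJ\pKK) = \S(\pJJ) + \S(\pII\pJJ\pKK),
\end{align}
and combining with $f_\mfC(\pII\pJJ) \leq \S(\pII\pJJ)$ and $f_\mfC(\pJJ\pKK) \leq \S(\pJJ\pKK)$ gives SSA immediately. The main obstacle is conceptual rather than computational: the entire argument hinges on the existence of the nested flow $\mfC$, which is precisely where the restriction to nesting balanced networks is essential. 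Indeed, \Cref{fig:not-nesting} already exhibits a balanced network violating SSA when nesting fails, confirming that this hypothesis is being genuinely used. If one preferred a proof in closer analogy with the subadditivity argument, one could instead bookkeep $\H(\pII|\pJJ)$ and $\H(\pII|\pJJ\pKK)$ using two separate nested flows and match the resulting signed sums of fluxes by partitioning cycles according to which of $\pII, \pJJ, \pKK, \pLL$ they traverse, but the direct cycle-level route above avoids that bookkeeping.
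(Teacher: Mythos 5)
Your proof is correct, and it takes a genuinely different route from the paper's. The paper proves SSA by writing $\I(\pII:\pJJ|\pKK) = \H(\pII|\pKK) - \H(\pII|\pJJ\pKK)$, re-deriving each conditional entropy via the flux decomposition from the subadditivity proof, and then matching up four signed sums of fluxes using flows that are nested across \emph{three} subsets at once (e.g.\ a flow simultaneously maximizing the flux through $\pKK$, $\pII\pKK$, and $\pII\pJJ\pKK$). Your argument instead uses a single nested flow $\mfC$ for the pair $(\pJJ,\pII\pKK)$ — saturating exactly the two terms on the right-hand side of SSA — together with a per-cycle inequality
\begin{align*}
    f_\CC(\pII\pJJ) + f_\CC(\pJJ\pKK) \geq f_\CC(\pJJ) + f_\CC(\pII\pJJ\pKK),
\end{align*}
which, since each flux is $\lambda$ times a Boolean indicator, reduces to a finite case check (your identification of the indicator expression and of the three strictly-inequal sign patterns is correct, and I verified all sixteen cases). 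Summing over the cycles of $\mfC$ via the additivity in \Cref{def:multi-flow} and bounding $f_\mfC(\pII\pJJ) \leq \S(\pII\pJJ)$, $f_\mfC(\pJJ\pKK) \leq \S(\pJJ\pKK)$ then closes the argument. This is structurally the bit-thread-style proof of \cite{Freedman:2016zud} transplanted to cycle flows, with the per-cycle lemma playing the role of the pointwise bound; conceptually it makes explicit that each simple cycle contributes a rescaled GHZ entropy vector, which lies in the SSA cone. What your route buys is economy and a weaker hypothesis: it invokes only the pairwise nesting literally guaranteed by \Cref{def:nesting}, whereas the paper's bookkeeping relies on triply-nested minimal flows. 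What the paper's route buys is the explicit formula \eqref{eq:cond-ent} for conditional entropies in terms of cycle fluxes, which is reused elsewhere. One cosmetic nit: the phrase ``weighted by the magnitudes of the constituent simple cycles'' is redundant, since the magnitudes $\lambda_i$ are already built into each $f_{\CC_i}$; a plain sum over $\CC_i \in \mfC$ is what \Cref{def:multi-flow} provides and is all you need.
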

\begin{proof}
    Consider a graph with boundary vertices $\p V = \pII \cup \pJJ \cup \pKK \cup \pLL$, where $\pII,\pJJ,\pKK$, and $\pLL$ are mutually disjoint. We will prove
    \begin{align}
        \I(\pII:\pJJ|\pKK) = \S(\pII\pKK) + \S(\pJJ\pKK) - \S(\pKK) - \S(\pII\pJJ\pKK) \geq 0,
    \end{align}
    where $\I(\pII:\pJJ|\pKK)$ is the conditional mutual information. First, we observe that we can write $\I(\pII:\pJJ|\pKK)$ as
    \begin{align}
        \I(\pII:\pJJ|\pKK) = \H(\pII|\pKK) - \H(\pII|\pJJ\pKK).
    \end{align}
    As our network allows for nested flows, we can use \eqref{eq:cond-ent} to obtain
    \begin{align}\label{eq:cond-mi1}
    \begin{split}
        \I(\pII:\pJJ|\pKK) &= \sum_{\substack{ \CC \in \mfC(\pKK,\pII) \\ V(\CC) \cap \pKK = \oslash } } f_\CC(\pII,\pJJ\pLL) - \sum_{\substack{ \CC \in \mfC(\pKK,\pII) \\ V(\CC) \cap \pJJ\pLL = \oslash } } f_\CC(\pKK,\pII) \\
        &\qquad - \sum_{\substack{ \CC \in \mfC(\pJJ\pKK,\pII) \\ V(\CC) \cap \pJJ\pKK = \oslash } } f_\CC(\pII,\pLL) + \sum_{\substack{ \CC \in \mfC(\pJJ\pKK,\pII) \\ V(\CC) \cap \pLL = \oslash } } f_\CC(\pJJ\pKK,\pII)  \\
        &= \sum_{\substack{ \CC \in \mfC(\pKK,\pII,\pJJ) \\ V(\CC) \cap \pKK = \oslash } } f_\CC(\pII,\pJJ\pLL) - \sum_{\substack{ \CC \in \mfC(\pKK,\pII,\pJJ) \\ V(\CC) \cap \pJJ\pLL = \oslash } } f_\CC(\pKK,\pII) \\
        &\qquad - \sum_{\substack{ \CC \in \mfC(\pKK,\pJJ,\pII) \\ V(\CC) \cap \pJJ\pKK = \oslash } } f_\CC(\pII,\pLL) + \sum_{\substack{ \CC \in \mfC(\pKK,\pJJ,\pII) \\ V(\CC) \cap \pLL = \oslash } } f_\CC(\pJJ\pKK,\pII) ,
    \end{split}
    \end{align}
    where the second equality follows from us choosing a particular minimal multi-cycle flow that maximizes the flux through $\pKK,\pII\pKK$ simultaneously in the first two sums (namely one that also maximizes the flux through $\pII\pJJ\pKK$), and similarly choosing another particular minimal multi-cycle flow that maximizes the flow through $\pJJ\pKK$ and $\pII\pJJ\pKK$ in the last two sums (namely one that also maximizes the flux through $\pKK$). Now, we observe that the first sum satisfies
    \begin{align}\label{eq:ineq1}
    \begin{split}
        \sum_{\substack{ \CC \in \mfC(\pKK,\pII,\pJJ) \\ V(\CC) \cap \pKK = \oslash } } f_\CC(\pII,\pJJ\pLL) &\geq \sum_{\substack{ \CC \in \mfC(\pKK,\pII,\pJJ) \\ V(\CC) \cap \pJJ\pKK = \oslash } } f_\CC(\pII,\pJJ\pLL) = \sum_{\substack{ \CC \in \mfC(\pKK,\pII,\pJJ) \\ V(\CC) \cap \pJJ\pKK = \oslash } } f_\CC(\pII,\pLL)  ,
    \end{split}
    \end{align}
    where the inequality follows from the fact we are further restricting the sum over minimal simple cycles, and the equality follows from the same reasoning as \eqref{eq:reduce2}. Likewise, the fourth sum in \eqref{eq:cond-mi1} satisfies
    \begin{align}\label{eq:ineq2}
    \begin{split}
        \sum_{\substack{ \CC \in \mfC(\pKK,\pJJ,\pII) \\ V(\CC) \cap \pLL = \oslash } } f_\CC(\pJJ\pKK,\pII) \geq \sum_{\substack{ \CC \in \mfC(\pKK,\pJJ,\pII) \\ V(\CC) \cap \pJJ\pLL = \oslash } } f_\CC(\pJJ\pKK,\pII) = \sum_{\substack{ \CC \in \mfC(\pKK,\pJJ,\pII) \\ V(\CC) \cap \pJJ\pLL = \oslash } } f_\CC(\pKK,\pII) .
    \end{split}
    \end{align}
    Substituting \eqref{eq:ineq1} and \eqref{eq:ineq2} into \eqref{eq:cond-mi1}, we obtain the inequality
    \begin{align}
    \begin{split}
        \I(\pII:\pJJ|\pKK) &\geq \sum_{\substack{ \CC \in \mfC(\pKK,\pII,\pJJ) \\ V(\CC) \cap \pJJ\pKK = \oslash } } f_\CC(\pII,\pLL) - \sum_{\substack{ \CC \in \mfC(\pKK,\pJJ,\pII) \\ V(\CC) \cap \pJJ\pKK = \oslash } } f_\CC(\pII,\pLL)  \\
        &\qquad  + \sum_{\substack{ \CC \in \mfC(\pKK,\pJJ,\pII) \\ V(\CC) \cap \pJJ\pLL = \oslash } } f_\CC(\pKK,\pII) - \sum_{\substack{ \CC \in \mfC(\pKK,\pII,\pJJ) \\ V(\CC) \cap \pJJ\pLL = \oslash } } f_\CC(\pKK,\pII) .
    \end{split}
    \end{align}
    Now, the first sum is the maximum flux of multi-cycle flows that are only between $\pII$ and $\pLL$ while the second sum is the minimum such flux, both subject to the constraint we are maximizing the flux through $\pKK$ and $\pII\pJJ\pKK$. The reason is because minimal simple cycle flows between $\pII$ and $\pLL$ will contribute to the maximization of the flux through $\pII\pKK$ and $\pII\pJJ\pKK$ in the first sum, whereas they only contribute to the maximization of the flux through $\pII\pJJ\pKK$ in the second sum. As the maximum cannot be less than the minimum, the first two sums combine to be non-negative. Likewise, the third sum is the maximum flux of multi-cycle flows that are only between $\pII$ and $\pKK$ while the second sum is the minimum of such flux, both subject to the constraint we are maximizing the flux through $\pKK$ and $\pII\pJJ\pKK$. Similar to before, the reason is because minimal simple cycle flows between $\pII$ and $\pKK$ contribute to the maximization of the flux through $\pJJ\pKK$ and $\pII\pJJ\pKK$ in the third sum, whereas they only contribute to the maximization through $\pII\pJJ\pKK$ in the last sum. Thus the last two sums also combine to be non-negative. Thus,
    \begin{align}\label{eq:ssa-final}
        \I(\pII:\pJJ|\pKK) \geq 0,
    \end{align}
    and so SSA is satisfied.
\end{proof}

\section{Discussion}\label{sec:discussion}

In this paper, we introduced a potential way of extending the map between undirected graph models and holographic entropy vectors to more general quantum entropy vectors. In the well-established holographic case, the undirected graph model has a clear geometric interpretation on a fixed time slice of the corresponding bulk AdS geometry \cite{Bao:2015bfa}. The boundary vertices correspond to boundary subregions of the CFT, the internal vertices correspond to bulk subregions carved out by the RT surfaces, and each edge corresponds to the portion of the RT surface dividing the two bulk subregions, with the capacity of the edge given by the area of that portion. It is clear then that the entanglement entropy associated some subset of boundary regions is given by the min cut associated to the corresponding boundary vertices in the graph. Utilizing the max flow min cut theorem, \cite{Freedman:2016zud} demonstrated how the entanglement entropy can equivalently be recast as max flow out of the corresponding boundary vertices.

In the generalization that we proposed in this paper, we do not have such a clean geometric picture. Instead of generalizing the notion of graphs to undirected hypergraphs and links, as was previously explored in \cite{Bao:2020zgx,Bao:2021gzu}, we show that it suffices to consider max flows on directed graphs. However, rather than considering traditional flows with sources and sinks, we need to consider cycle flows, as this was crucial to ensure that purification symmetry is manifest. Utilizing cycle flows, we are able to show that our prescription for computing entanglement entropy satisfies both SA and SSA. Furthermore, we straightforwardly showed that the conventional graph model representing holographic entropy states is a special case of our cycle flow model, and similarly conjectured that the hypergraph model is a special case of our cycle flow model as well.\footnote{It would be interesting to also attempt to connect our cycle flow model to the hyperthreads considered in \cite{Harper:2021uuq,Harper:2022sky}.} Importantly, the entropy vector associated to the stabilizer state that could not be reproduced by the hypergraph model has a very simple realization in our cycle flow model (see \Cref{fig:vector15}).

We would still like to obtain some form of geometric interpretation for the cycle flow model. The usual intuition that undirected edges capture geometric connectivity seems subtle to extend to directed edges. However, the work of \cite{Chandra:2022fwi} did successfully provide a geometric picture for the GHZ-like entanglement that the hypergraphs of \cite{Bao:2020zgx} capture, even though hyperedges are at first glance not geometric in the sense standard edges are.\footnote{Indeed, what geometrizes hyperedges are patched Euclidean geometries which meet at a space of higher codimension. As such, these are non-locally Euclidean spaces (thus not manifolds). We thank Tom Hartman for pointing out this geometric interpretation of GHZ-like entropies.} %
Perhaps one particularly tantalizing clue is that the entropies of GHZ states involving $n$ parties are represented simply by a directed loop involving $n$ vertices in our cycle flow model. This clue, however, also presents a new mystery: the ordering of the vertices in the directed loop is arbitrary. Any ordering produces the same entropy vector. We regard this arbitrariness as a feature -- since many states can have the same GHZ-like entropy vector, perhaps our directed networks actually contain information beyond the entropy vector.

Furthermore, in the conventional ``bit thread'' model involving max flows, one only cared about the thread configuration making up the final multi-commodity flow, and not the individual flows. However, as is evidenced by \Cref{fig:example1}, for us the simple cycles making up a multi-cycle flow are actually important, and two multi-cycle flows with exactly the same flow along each edge actually correspond to different entropy vectors depending on their respective simple cycles composing them. Additionally, the individual simple cycles composing the multi-cycle flows are directed. For an undirected graph, we can transform it into a directed graph by taking every undirected edge and replacing it with two oppositely directed edges, as is illustrated in \Cref{fig:GtoN}, and the cycle flow is obtained from the max flow traversing some set of edges and coming back along the corresponding oppositely directed edges. However, for general balanced networks, the path we take ``going there'' can be different from the one we take ``coming back.'' It would be both interesting and worthwhile to understand the physical significance of these minimal cycle flows.

Indeed, one possible avenue towards better understanding the cycle flows is an improved understanding of the dual program. Because the primal variables are the flows along the cycles (and not the magnitude of the flows along edges), the dual variables, which are variables for the edges, do not result in a simple edge-cut prescription. Although we were successful in setting up the dual program in \Cref{app:dual-program} and used it to provide alternative proofs of SA, we leave using it to find an alternative proof for SSA, as well as more generally a construction of such an edge-cut protocol, for future work. Indeed, the dual program set up in \Cref{app:dual-program} does not require any restrictions coming from considering only nesting balanced networks, so it has the added benefit of proving all balanced networks satisfy SA. We hope that this will also inform us whether there is a cleaner way to categorize the types of balanced networks that satisfy nesting besides \Cref{thm:nesting}, since while \Cref{thm:nesting} gives necessary and sufficient conditions for nesting to be satisfied, it is not straightforward to check that a random balanced network satisfies nesting.

Lastly, it is important to point out that we have only proved necessary, but not sufficient, conditions for the cycle flow model to describe quantum entropy vectors. It is possible that there exist entropy vectors that obey all instances of SA and SSA, and yet are not realizable (or approximated arbitrarily well) by any quantum state. In other words, if we define an entropy cone given by the entropy vectors realizable by cycle flows, it is not clear whether or not this cone coincides with the quantum entropy cone, contains the quantum entropy cone, or neither. The relationship between the entropy cone defined by cycle flows and that defined by either hypergraphs \cite{Bao:2020zgx} or links \cite{Bao:2021gzu} is also unknown; although we conjectured in \Cref{conj:hypergraph-generalize} that the hypergraph cone is a subset of the cycle flow cones, proving this is an important next step. We leave the complete exploration of the interplay between the entropy vectors realizable using our cycle flow model and the quantum entropy cone to future work.

\section*{Acknowledgements}

We would like to thank Nina Anikeeva, Newton Cheng, Tom Hartman, Matthew Headrick, Oliver Kosut, Shinsei Ryu, Vincent Su, Brian Swingle, and Tadashi Takayanagi for helpful conversations, and especially Massimiliano Rota for useful feedback on the preprint. We would further like to thank Ning Bao, Sebastian Fischetti, and William Munizzi, who were involved in earlier iterations of this work. T.H. would like to dedicate this to the memory of his father Yuanzhan He, whose unwavering support made it possible for T.H. to be a better physicist and, more importantly, a better person. T.H. is supported by the Heising-Simons Foundation ``Observational Signatures of Quantum Gravity'' collaboration grant 2021-2817, the U.S. Department of Energy grant DE-SC0011632, and the Walter Burke Institute for Theoretical Physics. S.H.-C. is supported by the U.S. Department of Energy Award DE-SC0021886. C.K. is supported by the U.S. Department of Energy under grant number DE-SC0019470 and by the Heising-Simons Foundation ``Observational Signatures of Quantum Gravity'' collaboration grant 2021-2818.

\appendix

\section{Remarks on the Dual Program}
\label{app:dual-program}

The max cycle flow prescription we proposed in \eqref{eq:entropy-def} for computing entanglement entropy is a linear program. Consequently, there exists a dual program. As we shall see, the dual program cannot be simply described via a min edge cut prescription; nevertheless, we will find it provides a complementary approach to the max flow prescription. In \Cref{app:linear-program}, we will set up the linear program and its dual. Next, in \Cref{app:vertex-cut}, we explain the analogous notion of a vertex cut. Lastly, we demonstrate in \Cref{app:dual-sa} how to use the dual program to provide alternative proofs of SA. Because our dual program does not need the added assumption of a nesting balanced network, our proof works for balanced networks in general. We assume reasonable familiarity with convex duality, and refer the reader to \cite{boyd:2004} for an excellent exposition on this subject. 

\subsection{The Linear Program and its Dual}\label{app:linear-program}

To construct a linear program, we need to specify a set of variables, a set of linear constraints, and the extremization goal. In this subsection, we will assume our goal is to compute $\S(\pII)$, but the generalization to multipartite regions $\pII\pJJ$ for any $\pJJ \subset \pII^c$ should be clear. 

We begin by constructing the primal program. Consider a balanced directed network $\CN = (V,E,c)$ with boundary vertices $\p V \subseteq V$ partitioned as $\p V = \{\pII,\pII^c\}$. Consider all possible edge cycles $E_i$ in $\CN$,%
\footnote{We remind the reader an edge cycle $E_i$ is an ordered sequence of distinct edges $\{e_1, e_2, \ldots, e_n \}$ such that $t(e_i) = s(e_{i+1})$ for $i=1,\ldots,n$ with $e_{n+1} \coloneqq e_1$ (see \Cref{def:cycle-flow}).} %
for $i=1,\ldots,M_\pII$, such that each $E_i$ contains both a vertex in $\pII$ and a vertex in $\pII^c = \p V \setminus \pII$. In particular, we are \emph{not} restricting our list of edge cycles to those corresponding to \emph{minimal} simple cycles. Our primal program for the flow is then given by the following:
\begin{itemize}
    \item Our variables are $\lambda_i$ for $i=1,\ldots,M_\pII$, which are the fluxes associated to each edge cycle $E_i$. 
    \item Our primal objective is to maximize
    \begin{equation}\label{eq:primal-goal}
        \sum_{i=1}^{M_\pII} \lambda_i .
    \end{equation}
    \item The linear constraints in the program are as follows. First, we have
    \begin{align}\label{eq:positive-constraint}
        \lambda_i\geq 0 \qquad\text{for all $i=1,\ldots,M_\pII$,}    
    \end{align}
    since the flux through any edge cycle is non-negative. Furthermore, for every edge $e \in \CN$, there is the capacity constraint \eqref{eq:multi-flow-cond}, which states
    \begin{equation}\label{eq:primal-constraint}
        \sum_{i=1}^{M_\pII} u_i(e) \lambda_i \leq c(e) \qquad\text{for all $e \in E$,}  
\end{equation}
where $u_i(e)=1$ if $e$ is an edge contained in the edge cycle $i$ and $0$ otherwise, and $c(e)$ is the capacity of the edge.
\end{itemize}

Notice that although this program appears very similar to the program of finding the max flow of a discrete directed graph, there is a crucial difference. In the usual max flow program, the variables are the fluxes through each edge. Accordingly, given a multi-flow consisting of flows $\CF_1,\CF_2,\ldots$, it does not matter which $\CF_i$ is contributing to the flux. On the other hand, for us, knowing the flux through each edge is not sufficient, as is evidenced in \Cref{fig:example1}, where both multi-cycle flows has exactly the same flux through the edges. Thus, we must keep track of not just the fluxes through each edge, but rather the fluxes associated to the \emph{particular cycles} themselves. As we shall see below, it is precisely this complication that renders the dual program more nontrivial than the usual min cut program.\footnote{Nevertheless, we note the rewriting of our primal program in terms of all possible cycles on the network $\CN$ is similar to the rewriting of the bit threads picture in terms of all possible paths from one region of the boundary back to another, as per Section 5.1.1 of \cite{Headrick:2022nbe}, although our cycles are directed and discrete. We leave an exploration of possible connections for future work. \label{fn:bit-thread-connection}}

We now want to construct the corresponding dual program, which amounts to switching the variables and the constraints.  In our case, the number of edge cycles $M_\pII$ in the network $\CN$ corresponds to the number of variables in the primal program, and thus to the number of constraints in the dual.  Similarly, if we treat the constraints \eqref{eq:positive-constraint} as implicit constraints, the number of explicit constraints \eqref{eq:primal-constraint} is the number of edges $e$ in the network $\CN$, which corresponds to the number of variables in the dual program.  We will denote these dual variables $d(e)$, and the dual program is given as follows:
\begin{itemize}
    \item The dual variables, one per edge, are $d(e)$.  
    \item The dual objective is to \emph{minimize} 
    \begin{equation}\label{eq:MinGoal}
        \sum_{e \in E} c(e)d(e).
    \end{equation}
    \item As for the case of the primal linear constraints, there is the dual non-negativity constraint
    \begin{align}
        d(e) \geq 0 \qquad\text{for all $e \in E$.}
    \end{align}
    Furthermore, the linear constraints in the dual program are
    \begin{equation}\label{eq:dual-constraints}
        \sum_{e \in E} u_i(e) d(e) \geq 1 \quad\text{for $i=1,\ldots,M_\pII$.}
    \end{equation}
    Thus, there are $M_\pII$ constraints, as expected.
\end{itemize}

We finish by observing that the dual program is precisely equivalent to the primal program by strong duality, which follows from Slater's condition. To check that Slater's condition is satisfied, we simply need to show the existence of primal variables such that the constraints \eqref{eq:positive-constraint} and \eqref{eq:primal-constraint} are strictly satisfied. A simple choice is to set $\lambda_i = \ve$ for a sufficiently small $\ve > 0$ for all $i = 1, \ldots, M_\pII$.

\subsection{Vertex Cuts}\label{app:vertex-cut}
As we remarked earlier, the fact our primal variables are the fluxes on cycle flows, and not edges, makes it difficult to interpret the dual variables $d(e)$ as edge cuts in the usual ``min cut'' perspective.%
\footnote{However, as is the case with standard source-sink directed flow min cuts, we suspect there exists a feasible minimization of \eqref{eq:MinGoal} where $d(e)$ is either 0 or 1. Intuitively, if we view every simple cycle as a rubber band, whose magnitude corresponds to the ``thickness'' of the rubber band, setting $d(e)=1$ corresponds to cutting enough rubber bands at edge $e$ so that the thickness of the cut rubber bands equals the capacity of the edge. In this sense, the dual variables $d(e)$ are ``cycle cuts.''} 
Instead, we will introduce the notion of a vertex cut, which we define as follows. For any network $\CN = (E,V,c)$, a vertex cut $\CV$ is a subset of (not necessarily boundary) vertices, with its magnitude given by
\begin{align}\label{eq:vertex-cut-mag}
    |\CV| \coloneqq \max_{\mfC} f_{\mfC}\big(\CV,V \setminus \CV \big).
\end{align}
Note that as defined, we have $|\CV| = |V \setminus \CV|$. Now, given some subset of boundary vertices $\pII \in \p V$, we will consider the set of all vertex cuts that separate $\pII$ from the rest of the boundary vertices $\pII^c = \p V \setminus \pII$. In other words, a vertex cut associated to $\pII$ can contain any number of internal vertices in addition to $\pII$, as long as no further boundary vertices are included. We will then show that $\S(\pII)$ is given by the minimum magnitude of all such cuts. 

For every multi-cycle flow $\mfC(\pII)$ maximizing the flux through boundary vertices $\pII$, there exists a vertex cut $\CV_{\pII}$ whose magnitude is precisely the max flow $\S(\pII)$, which we shall construct as follows. From $\mfC(\pII)$, we can construct a reduced multi-cycle flow $\mfC^r(\pII)$ by removing any simple cycle flow $\CC \in \mfC(\pII)$ that does not contain a vertex in both $\pII$ and $\pII^c$. Notice that $\mfC^r(\pII)$ still maximizes the flow out of the $\pII$ vertices; it just does not contain any extraneous simple cycle flows. Using $\mfC^r(\pII)$, we can construct the residual network $\Res(\CN,\mfC^r(\pII))$, as defined in \eqref{eq:residual-network}. 

In the residual network $\Res(\CN,\mfC^r(\pII))$, the vertices in $\pII$ are disconnected from those in $\pII^c$. Otherwise, either the network $\CN$ is unbalanced, or the flux through the multi-cycle flow $\mfC^r(\pII)$ could be increased, which is not possible as it is a max cycle flow through $\pII$. Consequently, we associate to $\mfC(\pII)$ the vertex cut $\CV_{\mfC(\pII)}$ consisting of $\pII$ and any other vertices (both internal and boundary) still connected to a vertex in $\pII$ in the residual network $\Res(\CN,\mfC^r(\pII))$. As for the magnitude $|\CV_{\mfC(\pII)}|$ of the cut, note that by \eqref{eq:vertex-cut-mag}, it cannot be simply the sum of the capacities of all edges leaving the cut, as the standard ``min cut'' perspective would suggest.%
\footnote{Consider the 4-party GHZ state in \Cref{fig:GHZ3}. If we choose our vertex cut to consist of vertices $\{A,C\}$, then by \eqref{eq:vertex-cut-mag} $|\CV(\{A,C\})| = 1$. However, if we just summed the capacity of the edges which leave the vertex cut, the flow would instead be $2$.} %
Instead, by \eqref{eq:vertex-cut-mag}, we see that
\begin{equation}\label{eq:CutValue}
    |\CV_{\mfC(\pII)}| = \max_{\mfC} f_\mfC\big( \CV_{\mfC(\pII)}, V \setminus \CV_{\mfC(\pII)} \big). 
\end{equation}
Notice that this value coincides with the magnitude of the max flow  $f_{\mfC(\pII)}(\pII,\pII^c)$, since it was built from the residual network $\Res(\CN,\mfC^r(\pII))$. Indeed, if $|\CV_{\mfc(\pII)}|$ were larger than $f_{\mfC(\pII)}(\pII,\pII^c)$, then the magnitude of the original multi-cycle flow $\mfC(\pII)$ could have been increased, which is impossible since it is a max flow arrangement. This proves $\S(\pII)$ is the magnitude of one such vertex cut containing $\pII$, and we denote this vertex cut $\CV_\pII$.

To see that $\S(\pII)$ is the minimum magnitude of all cuts containing only the boundary vertices $\pII$ (but may contain additional internal vertices), suppose $\CV'_\pII$ is another such vertex cut. If the vertex set $\CV'_\pII$ is disconnected from its complement $V \setminus \CV'_\pII$ in $\Res(\CN,\mfC^r(\pII))$, then $\mfC^r(\pII)$ is one such flow that we are maximizing over in \eqref{eq:vertex-cut-mag} when determining $|\CV'_\pII|$. Otherwise, we have to append additional simple cycles to $\mfC^r(\pII)$, obtaining a resultant multi-cycle flow $\mfC'(\pII)$, to ensure $\CV'_\pII$ is disconnected from its complement in the residual graph $\Res(\CN,\mfC'(\pII))$. In this case, $\mfC'(\pII)$ is one such flow that we are maximizing over in \eqref{eq:vertex-cut-mag} when determining $|\CV'_\pII|$, and note that by construction
\begin{align}
    f_{\mfC'(\pII)} (\CV'_\pII , V \setminus \CV'_\pII) \geq f_{\mfC(\pII)} (\CV_\pII , V \setminus \CV_\pII).
\end{align}
Thus, we see that $|\CV'_\pII| \geq |\CV_\pII|$, proving the claim.

\subsection{Alternative Proofs of SA }\label{app:dual-sa}

In this subsection, we will give two related ways to prove SA. First, we will prove it using the above described vertex cuts. Then, we will show how we can also prove SA by directly using the dual variables $d(e)$. For simplicity, we will use $\CV(\pII)$ throughout to refer to the vertex cut constructed using the max flow $\mfC(\pII)$ described above, and keep the dependence on the multi-cycle flow $\mfC(\pII)$ implicit. Furthermore, in this subsection only, we will adopt the notation $\pII^c \coloneqq V \setminus \pII$, so that the superscript $c$ indicates the complement with respect to \emph{all} the vertices, and not just the boundary vertices.

Let $\pII,\pJJ$ be two disjoint subsets of the boundary vertices, and construct the corresponding  vertex cuts $\CV(\pII)$ and $\CV(\pJJ)$ using the max flow procedure. As we argued in the previous subsection, by strong duality the magnitudes of the vertex cuts are equal to $\S(\pII)$ and $\S(\pJJ)$, respectively, so that
\begin{align}
    \S(\pII) + \S(\pJJ) = |\CV(\pII)| + |\CV(\pJJ)| .
\end{align}
Now, $\CV(\pII) \cup \CV(\pJJ)$ is obviously a vertex cut for $\pII\pJJ$. Furthermore, every simple cycle between $\CV(\pII) \cup \CV(\pJJ)$ and $(\CV(\pII) \cup \CV(\pJJ))^c$ is either a simple cycle between $\CV(\pII)$ and $\CV(\pII)^c$ or one between $\CV(\pJJ)$ and $\CV(\pJJ)^c$. Therefore, given a max multi-cycle flow $\mfC_{\pII\pJJ}$ maximizing the flux through $\CV(\pII) \cup \CV(\pJJ)$, we can rewrite this as
\begin{align}\label{eq:decomp}
    \mfC_{\pII\pJJ} = \big\{\mfC_\pII, \mfC_\pJJ\big\},
\end{align}
where $\mfC_\pII$ is the set of simple cycles in $\mfC_{\pII\pJJ}$ that involves $\CV(\pII)$ and $\CV(\pII)^c$, and $\mfC_\pJJ$ are the remaining simple cycles in $\mfC_{\pII\pJJ}$ that necessarily involves $\CV(\pJJ)$ and $\CV(\pJJ)^c$. It follows
\begin{align}
\begin{split}
    |\CV(\pII) \cup \CV(\pJJ)| &= f_{\mfC_\pII} \big(\CV(\pII) \cup \CV(\pJJ),  (\CV(\pII) \cup \CV(\pJJ))^c \big) + f_{\mfC_\pJJ} \big(\CV(\pII) \cup \CV(\pJJ),  (\CV(\pII) \cup \CV(\pJJ))^c \big) \\
    &= f_{\mfC_\pII}\big(\CV(\pII), \CV(\pII)^c \big) + f_{\mfC_\pJJ}\big(\CV(\pJJ), \CV(\pJJ)^c \big) \\
    &\leq |\CV(\pII)| + |\CV(\pJJ)| ,
\end{split}
\end{align}
where the second equality followed from the way we grouped the simple cycles in \eqref{eq:decomp}, and the last inequality follows from the fact the magnitude $|\CV(\pII)|$ is the max flow out of $\CV(\pII)$, and likewise for $|\CV(\pJJ)|$. Further using the fact that the vertex cut $\CV(\pII\pJJ)$ is the one that has the minimum magnitude among all possible vertex cuts of $\pII\pJJ$, implying that $|\CV(\pII\pJJ)| \leq |\CV(\pII) \cup \CV(\pJJ)|$, we thus have
\begin{align}
    \S(\pII) + \S(\pJJ) = |\CV(\pII)| + |\CV(\pJJ)| \geq |\CV(\pII) \cup \CV(\pJJ)| \geq |\CV(\pII\pJJ) | = \S(\pII\pJJ),
\end{align}
thereby giving our first alternative proof of SA.

Similarly, we can also prove SA directly using the dual variables $d(e)$. Let $d_I(e)$ and $d_J(e)$ be the dual variables chosen to minimize the dual objectives \eqref{eq:MinGoal} corresponding to $\pII$ and $\pJJ$ while satisfying the constraints \eqref{eq:dual-constraints}, so that
\begin{align}
\begin{split}
    \sum_{e \in E} u_i^I(e) d_I(e) &\geq 1   \quad\text{for $i = 1,\ldots, M_\pII$} \\
    \sum_{e \in E} u_i^J(e) d_J(e) &\geq 1   \quad\text{for $i = 1,\ldots, M_\pJJ$},    
\end{split}  
\end{align}
where $u_i^I(e) = 1$ if $e$ is contained in the $i$th cycle between $\pII$ and $\pII^c$ and 0 otherwise, and likewise for $u_i^J$. It is then obvious that by the non-negativity of the dual variables $d(e)$ that
\begin{align}
\begin{split}
    \sum_{e \in E} u_i^{I}(e) (d_I(e) + d_J(e)) \geq 1 \quad\text{for $i = 1,\ldots, M_\pII$} \\
    \sum_{e \in E} u_i^{J}(e) (d_I(e) + d_J(e)) \geq 1 \quad\text{for $i = 1,\ldots, M_\pJJ$} .
\end{split}
\end{align}
Since every simple cycle between $\pII\pJJ$ and $(\pII\pJJ)^c$ must be a simple cycle between $\pII$ and $\pII^c$ or one between $\pJJ$ and $\pJJ^c$, this means $d_I(e) + d_J(e)$ is a dual variable associated to the simple cycle flows between $\pII\pJJ$ and $(\pII\pJJ)^c$. Thus, by the dual program, if we denote $d_{IJ}(e)$ to be the dual variable that minimizes \eqref{eq:MinGoal} given the corresponding constraint \eqref{eq:dual-constraints} involving cycle flows between $\pII\pJJ$ and $V \setminus (\pII\pJJ)$, we get
\begin{align}
\begin{split}
    \S(\pII) + \S(\pJJ) = \sum_{e \in E} c(e) (d_I(e) + d_J(e)) \geq \sum_{e \in E} c(e) d_{IJ}(e) = \S(\pII\pJJ),
\end{split}
\end{align}
thereby proving SA in yet another way.

Of course, we are also interested to prove SSA using the dual program. As we proved that SSA holds only for nesting balanced networks, this would involve understanding how to capture the notion of nesting cycle flows from the dual perspective. We leave such explorations to future work.

\bibliography{cycle-flow-bib}{}
\bibliographystyle{utphys}

\end{document}